\newtheorem{theorem}{Theorem}
\newtheorem{lemma}{Lemma}
\newtheorem{proposition}{Proposition}
\theoremstyle{definition}
\newtheorem{definition}{Definition}
\newtheorem{remark}{Remark}
\newtheorem{assumption}{Assumption}
\newtheorem{problem}{Problem}
\newtheorem{property}{Property}
\title{Decentralized Tube-based Model Predictive Control of Uncertain Nonlinear Multi-Agent Systems}
\author{Alexandros Nikou and Dimos V. Dimarogonas%
\thanks{The authors are with the School of Electrical Engineering and Computer Science, KTH Royal Institute of Technology, SE-100 44, Stockholm, Sweden and with the KTH Center for Autonomous Systems. Email: {\tt\small \{anikou, dimos\}@kth.se}. This work was supported by the H2020 ERC Starting Grant BUCOPHSYS, the EU H2020 Co4Robots project, the Swedish Foundation for Strategic Research (SSF), the Swedish Research Council (VR) and the Knut och Alice Wallenberg Foundation (KAW).}}
\date{}
\begin{document}
\maketitle

\begin{abstract}
This paper addresses the problem of decentralized tube-based nonlinear Model Predictive Control (NMPC) for a class of uncertain nonlinear continuous-time multi-agent systems with additive and bounded disturbance. In particular, the problem of robust navigation of a multi-agent system to predefined states of the workspace while using only local information is addressed, under certain distance and control input constraints. We propose a decentralized feedback control protocol that consists of two terms: a nominal control input, which is computed online and is the outcome of a Decentralized Finite Horizon Optimal Control Problem (DFHOCP) that each agent solves at every sampling time, for its nominal system dynamics; and an additive state feedback law which is computed offline and guarantees that the real trajectories of each agent will belong to a hyper-tube centered along the nominal trajectory, for all times. The volume of the hyper-tube depends on the upper bound of the disturbances as well as the bounds of the derivatives of the dynamics. In addition, by introducing certain distance constraints, the proposed scheme guarantees that the initially connected agents remain connected for all times. Under standard assumptions that arise in nominal NMPC schemes, controllability assumptions as well as communication capabilities between the agents, we guarantee that the multi-agent system is ISS (Input to State Stable) with respect to the disturbances, for all initial conditions satisfying the state constraints. Simulation results verify the correctness of the proposed framework. \\

\noindent \textbf{Keywords :} Nonlinear Model Predictive Control (NMPC), Robust Control, Tube-based MPC, Multi-Agent Systems, Navigation, Connectivity Maintenance, Input to State Stability (ISS).
\end{abstract}

\section{Introduction}

During the last decades, \emph{NMPC} has been proven to be a powerful control framework for dealing with the problem of stabilization of dynamical systems under state and input constraints \cite{michalska_1993, frank_1998_quasi_infinite, mayne_2000_nmpc, frank_2003_nmpc_bible, morrari_npmpc, grune_2011_nonlinear_mpc, borrelli_2013_nmpc}. One of the main challenges in NMPC is the treatment of potential uncertainties due to imperfect modeling and/or disturbances that may affect the system. In parallel to that, \emph{decentralized control of multi-agent systems} has gained significant attention due to its great variety of applications including multi-robot systems, transportation and biological systems \cite{olfati_murray_concensus, ren_beard_concensus}. In this paper, we aim to exploit a novel robust MPC framework in order to solve a collaborative multi-agent navigation problem under certain distance and input constraints.

The literature on the problem of robust NMPC has been extensively in the last years. Authors in \cite{camacho_2002_input_to_state, parisini_2009_restricted_sets} proposed a method of constraint sets tightening for guaranteeing robust stability. A  min-max robust MPC approach has been provided in \cite{maciejowski_minmax, camacho_2006_minmax, lazar_min_max}. A promising robust strategy, originally proposed for discrete-time linear systems in \cite{rakovic_2004_tubes_1, rakovic_2005_tubes_2, gonzalez2011online}, is the so called tube-based approach. In tube-based MPC a Finite Horizon Optimal Control Problem (FHOCP) is solved online for the nominal system, while the real trajectory is guaranteed to remain in a bounded tube for all times. Tube-based approaches for nonlinear discrete-time systems have been considered in \cite{mayne2007tube, magni2003robust, cannon2011_tubes, mayne2011tube, bayer2013discrete}. Authors in \cite{farina2012tube} have addressed the linear continuous-time case. In \cite{china1, china2}, regarding the computation of the offline feedback controller, the discrepancy between the nominal nonlinear system with the corresponding linear system has been considered. In \cite{yu_2013_tube}, sufficient conditions for affine in the control continuous-time nonlinear systems with constant matrices multiplying the control input vectors have been proposed, which we aim to extend here in order to cover a larger class of nonlinear systems, and in particular decentralized multi-agent systems.

\emph{Multi-agent navigation} in an important field in both the robotics and the control communities, due to the need for autonomous control
of multiple robotic agents in the same workspace \cite{dimos_2006_automatica_nf, panagou_potential_fields_unicycle, loizou_2017_navigation_transformation}. Important applications of multi-agent navigation arise also in the fields of air-traffic management and autonomous driving for guaranteeing collision avoidance with other cars and obstacles. In this work, we study the problem of decentralized navigation for nonlinear multi-agent systems with network connectivity maintenance. Decentralized NMPC schemes for multi-agent navigation have been investigated in \cite{distributed_mpc_sastry, fukushima_2005_distributed_mpc_linearization, trodden2006robust, richards2007robust, Dunbar2006549, aguiar_multiple_uavs_linearized_mpc, IJC_2018_alex}. In the latter works, however, either the under consideration dynamics are linear/simple or the robust MPC frameworks do not invoke the tube-based online optimization and offline feedback control design approach. Other applications are formation control, in which the agents are required to reach a predefined geometrical shape (see e.g., \cite{alex_chris_automatica_2018}) and high-level planning where it is required to provide decentralized controllers for navigating the agents between regions of interest of the workspace (see e.g., \cite{alex_acc_2016}).

Motivated by the aforementioned, the contribution of the paper is to propose a decentralized tube-based feedback control protocol for a general class of uncertain nonlinear continuous-time multi-agent systems, i.e., a more general class of systems than the ones that have already been studied in the literature. More specifically, each agent solves a DFHOCP and exchanges its open-loop predicted trajectory with its neighbors in order to reach a predefined state, under certain distance and inputs constraints. The proposed control consists of two parts: the first part is the online solution to a nominal DFHOCP, which is solved at every constant sampling time; the second part is a state feedback law which is calculated offline in order to guarantee that the error between the real and the nominal trajectory remains in a bounded hyper-tube, for all times. Under standard nominal NMPC and communication capabilities assumptions (see \cite{frank_1998_quasi_infinite, muller2012cooperative}) between neighboring agents, we show that the proposed control law renders the closed multi-agent system ISS with respect to the disturbances. The contributions of this paper are summarized as follows:
\begin{itemize}
	\item We develop a systematic control design methodology for tube-based NMPC which guarantees ISS for uncertain nonlinear non-affine continuous-time systems.
	\item The above results are exploited and extended for solving a constrained navigation multi-agent problem under coupled constraints in a decentralized manner.
\end{itemize}

The remainder of this manuscript is organized as follows: In Section \ref{sec:notation_preliminaries} the notation and preliminaries are given. Section \ref{sec:problem_formulation} provides the system dynamics under consideration and the problem statement. Section \ref{sec:main_results} discusses the technical details of the proposed solution and Section \ref{sec:simulation_results} is devoted to a simulation example. Finally, conclusions and future work are discussed in Section \ref{sec:conclusions}.

\section{Notation and Preliminaries} \label{sec:notation_preliminaries}

The sets of positive integers and real numbers are denoted by $\mathbb{N}$ and $\mathbb{R}$, respectively; $\mathbb{R}^n_{\geq 0}$ and $\mathbb{R}^n_{> 0}$ are the sets of real
$n$-vectors with all elements nonnegative and positive, respectively. Sets, vectors and matrices will be denoted by calligraphic, small and capital letters, respectively. The notation $\|x\| \coloneqq \sqrt{x^\top x}$ is used for the Euclidean norm of a vector $x \in \mathbb{R}^n$; $I_n \in \mathbb{R}^{n \times n}$ and $0_{m \times n} \in \mathbb{R}^{m \times n}$ are the identity matrix and the $m \times n$ matrix with all entries zeros, respectively; $\lambda_{\min}(A)$ denotes the minimum absolute value of the real part of the eigenvalues of a matrix $A \in \mathbb{R}^{n \times n}$. The set-valued function $\mathcal{B}:\mathbb{R}^n\times\mathbb{R}_{> 0} \rightrightarrows \mathbb{R}^n$, defined by $\mathcal{B}(x, r) \coloneqq \{y \in \mathbb{R}^n: \|y-x\| \leq r\},$ represents the $n$-th dimensional ball with center $x \in \mathbb{R}^{n}$ and radius $r \in \mathbb{R}_{> 0}$. Given a vector valued function $f: \mathbb{R}^n \to \mathbb{R}^m$, $\frac{\partial f_i}{\partial x_j}$ denotes the element of row $i$ and column $j$ of the Jacobian matrix of $f$, with $i$, $j \in \{1,\dots,n\}$. Define a vector of a canonical basis of $\mathbb{R}^n$ by:
\begin{align} \label{eq:ell_vectors}
\ell_n(i) \coloneqq \left[0, \dots, 0, \underbrace{1}_{i-\rm{th \ element}}, 0, \dots, 0 \right]^\top \in \mathbb{R}^n.
\end{align}

\begin{definition}
	Given two vectors $x$, $y \in \mathbb{R}^n$ their \emph{convex hull} is defined by:
	\begin{align*} \label{eq:convex_hull}
	\rm{Co}(x,y) \coloneqq \{\xi : \xi = \theta x + (1-\theta) y, 0 < \theta < 1 \}.
	\end{align*}
\end{definition}

\begin{definition} \label{def:p_difference}
	Given the sets $\mathcal{S}_1$, $\mathcal{S}_2 \subseteq \mathbb{R}^n$ and the matrix $A \in \mathbb{R}^{n \times m}$, the \emph{Minkowski addition}, the \emph{Pontryagin difference} and the \emph{matrix-set multiplication} are respectively defined by: 
	\begin{align*}
	\mathcal{S}_1 \oplus \mathcal{S}_2 & \coloneqq \{s_1 + s_2 \in \mathbb{R}^n : s_1 \in \mathcal{S}_1, s_2 \in \mathcal{S}_2\}, \notag \\
	\mathcal{S}_1 \ominus \mathcal{S}_2 & \coloneqq \{s_1 \in \mathbb{R}^n: s_1+s_2 \in \mathcal{S}_1, \forall s_2 \in \mathcal{S}_2\}, \notag \\
	A \circ \mathcal{S} & \coloneqq \{a: \exists s \in \mathcal{S}, a = As\}.
	\end{align*}
\end{definition}

\begin{property} (\cite{kolmanovsky1998theory})\label{prop:set_prop}
	Let $\mathcal{S}_1$, $\mathcal{S}_2 \subseteq \mathbb{R}^n$ and assume that $\mathcal{S}_1 \ominus \mathcal{S}_2 \neq \emptyset$. Then it holds that $\big(\mathcal{S}_1 \ominus \mathcal{S}_2 \big) \oplus \mathcal{S}_2 \subseteq \mathcal{S}_1$.
\end{property}

\begin{lemma} \label{lemma:quadr_forms}
	For any vectors $x, y \in \mathbb{R}^{n}$, positive definite matrix $M \in \mathbb{R}^{n \times n}$ and constant $\rho > 0$ it holds that:
	\begin{equation} \label{eq:basic_ineq}
	x^\top M y \le \frac{1}{4 \rho} x^\top M x + \rho y ^\top M y.
	\end{equation}
\end{lemma}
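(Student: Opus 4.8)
The plan is to recognize \eqref{eq:basic_ineq} as a weighted Young's inequality in the inner product induced by $M$, and to prove it by completing the square. The one structural fact I would use about $M$ is that, being positive definite (hence in particular symmetric and positive semidefinite), it satisfies $z^\top M z \ge 0$ for every $z \in \mathbb{R}^n$ and $x^\top M y = y^\top M x$ for all $x, y \in \mathbb{R}^n$.

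First I would introduce the vector
\begin{equation*}
z \coloneqq \frac{1}{2\sqrt{\rho}}\, x - \sqrt{\rho}\, y \in \mathbb{R}^n,
\end{equation*}
which is well defined since $\rho > 0$. Then I would compute $z^\top M z$ by expanding the product and using symmetry of $M$ to merge the two cross terms:
\begin{equation*}
z^\top M z = \frac{1}{4\rho}\, x^\top M x - x^\top M y + \rho\, y^\top M y .
\end{equation*}
Since $M$ is positive semidefinite, the left-hand side is nonnegative, and rearranging the displayed identity immediately yields \eqref{eq:basic_ineq}.

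An equivalent route, if one prefers to reduce to the Euclidean case, is to use the symmetric positive definite square root $M^{1/2}$, set $u \coloneqq M^{1/2} x$ and $v \coloneqq M^{1/2} y$, and observe that the claim becomes $u^\top v \le \frac{1}{4\rho}\|u\|^2 + \rho\|v\|^2$, which follows from $\big\| \tfrac{1}{2\sqrt{\rho}} u - \sqrt{\rho}\, v \big\|^2 \ge 0$. I do not anticipate any real obstacle here: the only point to be careful about is that the cancellation of the cross terms relies on the symmetry of $M$, and the sign of the residual relies on its (semi)definiteness, both of which are guaranteed by the positive definiteness hypothesis; the constant $\frac{1}{4\rho}$ is exactly what makes the completed square exact.
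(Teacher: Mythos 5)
Your proof is correct and is essentially the paper's own argument: the paper expands $(x-2\rho y)^\top M(x-2\rho y)\ge 0$ and divides through, and your vector $z=\tfrac{1}{2\sqrt{\rho}}x-\sqrt{\rho}\,y$ is just a scalar multiple of $x-2\rho y$, so the completed square is identical. Your explicit remark that the cross terms merge only because $M$ is symmetric is a point the paper glosses over (it silently uses $x^\top M y = x^\top M^\top y$), but the substance is the same.
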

\begin{proof}
	By using the facts that $\rho >0$, $M > 0$ the following equivalences hold
	$(x-2\rho y)^\top M (x- 2\rho y) \ge 0$ $\Leftrightarrow x^\top M x - 2 \rho x^\top M y - 2 \rho y^\top M x + 4 \rho^2 y^\top M y \ge 0$ $\Leftrightarrow x^\top M y + x^\top M^\top y \le \frac{1}{2 \rho} x^\top M x + 2 \rho y^\top M y$ $\Leftrightarrow x^\top M y \le \frac{1}{4 \rho} x^\top M x + \rho y^\top M y$.
\end{proof}

\begin{proposition} \label{eq:MVT}
	$[$Mean Value Theorem for vector valued functions$]$ \cite{zemouche2008observers} Consider a vector valued function $f: \mathbb{R}^n \to \mathbb{R}^m$. Assume that $f$ is differentiable on an open set $\mathcal{S} \subseteq \mathbb{R}^n$. Let $x$, $y$ two points of $\mathcal{S}$ such that ${\rm Co}(x,y) \subseteq \mathcal{S}$. Then, there exist contanst vectors $\xi_1$, $\dots,$ $\xi_m \in {\rm Co}(x,y)$ such that:
	\begin{align} \label{eq:eq_MVT}
	f(x)-f(y) = \left[ \sum_{k = 1}^{m} \sum_{j=1}^{n} \ell_m(k) \ell_n(j)^\top \frac{\partial f_k(\xi_k)}{\partial x_j} \right] (x-y).
	\end{align}
\end{proposition}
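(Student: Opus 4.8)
The plan is to reduce the vector-valued identity to $m$ independent applications of the classical scalar Mean Value Theorem, one per output coordinate, and then to repackage the resulting coordinatewise equalities into the claimed matrix form by means of the canonical basis vectors $\ell_m(\cdot)$ and $\ell_n(\cdot)$.

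First, I would fix $k \in \{1,\dots,m\}$ and introduce the auxiliary scalar function $g_k : [0,1] \to \mathbb{R}$, $g_k(t) \coloneqq f_k\big(y + t(x-y)\big)$. Because ${\rm Co}(x,y) \subseteq \mathcal{S}$ and $f$ (hence each $f_k$) is differentiable on the open set $\mathcal{S}$, the composite map $g_k$ is continuous on $[0,1]$ and differentiable on $(0,1)$, with
\begin{align*}
g_k'(t) = \sum_{j=1}^{n} \frac{\partial f_k\big(y + t(x-y)\big)}{\partial x_j}\,(x_j - y_j)
\end{align*}
by the chain rule. Applying the scalar Mean Value Theorem to $g_k$ on $[0,1]$ yields a point $t_k \in (0,1)$ with $g_k(1) - g_k(0) = g_k'(t_k)$; setting $\xi_k \coloneqq y + t_k(x-y)$, which lies in ${\rm Co}(x,y)$, this reads
\begin{align*}
f_k(x) - f_k(y) = \sum_{j=1}^{n} \frac{\partial f_k(\xi_k)}{\partial x_j}\,(x_j - y_j).
\end{align*}

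Next I would assemble the $m$ scalar identities into a single vector equation. Using the decomposition $v = \sum_{k=1}^{m} \ell_m(k)\, v_k$ valid for any $v \in \mathbb{R}^m$, together with $x_j - y_j = \ell_n(j)^\top (x-y)$, I substitute the coordinatewise equalities above to obtain
\begin{align*}
f(x) - f(y) = \sum_{k=1}^{m} \ell_m(k)\big(f_k(x) - f_k(y)\big) = \sum_{k=1}^{m} \sum_{j=1}^{n} \ell_m(k)\,\frac{\partial f_k(\xi_k)}{\partial x_j}\,\ell_n(j)^\top (x-y).
\end{align*}
Since each $\frac{\partial f_k(\xi_k)}{\partial x_j}$ is a scalar it commutes with the matrices $\ell_m(k)$ and $\ell_n(j)^\top$, and factoring $(x-y)$ out to the right produces exactly \eqref{eq:eq_MVT}.

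The argument is essentially bookkeeping, so no deep obstacle arises; the single point that genuinely requires care — and the reason the statement carries $m$ distinct intermediate points $\xi_1,\dots,\xi_m$ rather than one common $\xi$ — is that the mean value parameters $t_k$ delivered by the scalar theorem generally differ from coordinate to coordinate, so for $m > 1$ no common intermediate point can be guaranteed. The remaining verification is purely the $\ell$-vector algebra: $\ell_m(k)\ell_n(j)^\top$ is the $m \times n$ matrix whose only nonzero entry is a $1$ in position $(k,j)$, hence the double sum in \eqref{eq:eq_MVT} is precisely the $m \times n$ matrix with $(k,j)$ entry $\frac{\partial f_k(\xi_k)}{\partial x_j}$, i.e. the Jacobian-type matrix whose $k$-th row is evaluated at $\xi_k$.
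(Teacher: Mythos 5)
Your proof is correct: the paper states this proposition without proof, citing \cite{zemouche2008observers}, and your component-wise reduction to the scalar Mean Value Theorem along the segment $t \mapsto y + t(x-y)$, followed by reassembly via the canonical basis vectors $\ell_m(k)\ell_n(j)^\top$, is precisely the standard argument of that reference. You also correctly identify the one delicate point — that the intermediate points $\xi_1,\dots,\xi_m$ must be allowed to differ across components — and your observation that each $\xi_k = t_k x + (1-t_k)y$ with $t_k \in (0,1)$ lands in the open segment ${\rm Co}(x,y)$ as defined in the paper closes the argument.
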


\begin{definition} \label{def:k_class} \cite{khalil_nonlinear_systems} A continuous function $\alpha : [0, a) \to \mathbb{R}_{\ge 0} $ belongs to \emph{class $\mathcal{K}$} if it is strictly increasing and $\alpha (0) = 0$. A continuous function $\beta : [0, a) \times \mathbb{R}_{\ge 0} \to \mathbb{R}_{\ge 0}$ belongs to \emph{class $\mathcal{KL}$} if: $1)$ for a fixed $s$, the mapping $\beta(r,s)$ belongs to class $\mathcal{K}$ with respect to $r$; $2)$ for a fixed $r$, the mapping $\beta(r,s)$ is strictly decreasing with respect to $s$; and it holds that $\lim\limits_{s \to \infty} \beta(r,s) = 0$.
\end{definition}

\begin{definition} \label{def:RPI_set}
	Consider a system $\dot{x} = f(x,u,w)$ where: $x \in \mathcal{X}$, $u \in \mathcal{U}$, $w \in \mathcal{W}$ with initial condition $x(0) \in \mathcal{X}$. A set $\mathcal{S} \subseteq \mathcal{X}$ is a \emph{Robust Control Invariant (RCI) set} for the system, if there exists a feedback control law $u \coloneqq \kappa(x) \in \mathcal{U}$, such that for all $x(0) \in \mathcal{S}$ and for all $w(t) \in \mathcal{W}$ it holds that $x(t) \in \mathcal{S}$ for all $t \in \mathbb{R}_{\ge 0}$, along every solution $x(t)$.
\end{definition}

\begin{theorem} \label{teheorem:uub_theorem} \cite{khalil_nonlinear_systems}
	Consider a nonlinear system $\dot{x} = f(x)$ where $x \in \mathcal{X}$. Let $V:\mathcal{X} \to \mathbb{R}$ be a continuously differentiable function such that $\alpha_1(\|x\|) \le V(x) \le \alpha_2(\|x\|)$ and $\dot{V}(x) \le - g(x)$, $\forall \|x\| \ge \xi > 0$, where $\alpha_1$, $\alpha_2$ are class $\mathcal{K}$ functions and $g$ is a positive definite function. Then, there exists a finite time $\tau > 0$ such that the solution $x(t)$ satisfies $\|x(t)\| \le \alpha_{1}^{-1}(\alpha_2(\xi))$, $\forall t \ge \tau$.
\end{theorem}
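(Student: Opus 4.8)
The plan is to reproduce the classical sublevel-set argument for uniform ultimate boundedness. I would set $c \coloneqq \alpha_2(\xi)$ and work with the sublevel set $\Omega_c \coloneqq \{x \in \mathcal{X} : V(x) \le c\}$. The first step is to record the two sandwiching inclusions that follow directly from $\alpha_1(\|x\|) \le V(x) \le \alpha_2(\|x\|)$ and the strict monotonicity of class $\mathcal{K}$ functions: (i) $\mathcal{B}(0,\xi) \subseteq \Omega_c$, since $\|x\| \le \xi$ gives $V(x) \le \alpha_2(\|x\|) \le \alpha_2(\xi) = c$; and (ii) $\Omega_c \subseteq \mathcal{B}\big(0,\alpha_1^{-1}(\alpha_2(\xi))\big)$, since $V(x) \le c$ gives $\alpha_1(\|x\|) \le c$, hence $\|x\| \le \alpha_1^{-1}(c) = \alpha_1^{-1}(\alpha_2(\xi))$. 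A consequence I will use repeatedly: any $x$ with $V(x) \ge c$ satisfies $\alpha_2(\|x\|) \ge V(x) \ge \alpha_2(\xi)$, hence $\|x\| \ge \xi$, so that the hypothesis $\dot V(x) \le -g(x)$ is in force everywhere outside the interior of $\Omega_c$.

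The second step is to show $\Omega_c$ is positively invariant. On the boundary $\{V(x) = c\}$ we have $\|x\| \ge \xi$ by the remark above, so $x \neq 0$ and therefore $\dot V(x) \le -g(x) < 0$ because $g$ is positive definite. By the standard continuity argument (if a trajectory with $x(t_0) \in \Omega_c$ left $\Omega_c$, take the last time it had $V = c$ and derive a contradiction from $\dot V < 0$ just after it), no trajectory can cross $\partial\Omega_c$ outward, so $x(t_0) \in \Omega_c$ implies $x(t) \in \Omega_c$ for all $t \ge t_0$.

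The third step — the one requiring the most care — is to show every trajectory enters $\Omega_c$ in finite time. Take an initial condition with $V(x(0)) > c$ (if $V(x(0)) \le c$ then any $\tau > 0$ works by Step 2). As long as $x(t) \notin \Omega_c$ we have $\|x(t)\| \ge \xi$ and hence $\dot V(x(t)) \le -g(x(t)) \le 0$, so $V(x(t)) \le V(x(0))$ on that interval; thus the trajectory remains in the set $\mathcal{A} \coloneqq \{x \in \mathcal{X} : \|x\| \ge \xi,\ V(x) \le V(x(0))\}$, which is closed and bounded — bounded because $\alpha_1(\|x\|) \le V(x(0))$ forces $\|x\| \le \alpha_1^{-1}(V(x(0)))$ — hence compact. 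Since $g$ is continuous and strictly positive on $\mathcal{A}$ (no point of $\mathcal{A}$ is the origin), it attains a minimum $\delta > 0$ there. Therefore $\dot V(x(t)) \le -\delta$ while $x(t) \notin \Omega_c$, giving $V(x(t)) \le V(x(0)) - \delta t$, so $V(x(t)) \le c$ no later than $\tau \coloneqq \big(V(x(0)) - c\big)/\delta$. Combined with the invariance of Step 2, $x(t) \in \Omega_c$ for all $t \ge \tau$, and then inclusion (ii) yields $\|x(t)\| \le \alpha_1^{-1}(\alpha_2(\xi))$ for all $t \ge \tau$.

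The only genuine obstacle is the compactness/uniform-lower-bound argument in the third step: one must trap the trajectory in a compact set on which $g$ is bounded away from zero, which in turn hinges on using the monotone sandwich of $V$ to keep the trajectory simultaneously bounded (below a sublevel set of $V$) and away from the origin (outside $\Omega_c$). I would also flag the implicit standing assumption that $\alpha_2(\xi)$ lies in the range of $\alpha_1$, so that $\alpha_1^{-1}(\alpha_2(\xi))$ is well defined — this is automatic when $\alpha_1$ is of class $\mathcal{K}_\infty$, which is the generic situation in the NMPC setting of this paper.
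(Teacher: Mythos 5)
Your proof is correct: the paper gives no proof of this statement (it is quoted as a preliminary with a citation to Khalil), and your sublevel-set argument --- trapping the trajectory in the compact annular region $\{\xi \le \|x\| \le \alpha_1^{-1}(V(x(0)))\}$ to bound $\dot V \le -\delta < 0$, establishing finite-time entry into $\Omega_c = \{V \le \alpha_2(\xi)\}$, and then using positive invariance plus the sandwich $\alpha_1(\|x\|) \le V(x)$ to get the ultimate bound $\alpha_1^{-1}(\alpha_2(\xi))$ --- is precisely the standard argument from the cited reference. Your remark that $\alpha_1^{-1}(\alpha_2(\xi))$ must be well defined (e.g., $\alpha_1 \in \mathcal{K}_\infty$ or $\alpha_2(\xi)$ in the range of $\alpha_1$) is a legitimate implicit hypothesis that the paper's statement glosses over.
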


\begin{definition} \cite{khalil_nonlinear_systems} \label{def:ISS}
	Consider a nonlinear system $\dot{x} = f(x,u,w)$, where: $x \in \mathcal{X}$, $u \in \mathcal{U}$, $w \in \mathcal{W}$ with initial condition $x(0) \in \mathcal{X}$. The system is said to be \emph{Input-to-State Stable (ISS)} with respect to $w \in \mathcal{W}$, if there exist functions $\beta \in \mathcal{KL}$, $\gamma \in \mathcal{K}$ such that for any initial condition $x(0) \in \mathcal{X}$ and any bounded input $u \coloneqq \kappa(x) \in \mathcal{U}$, the solution $x(t)$ exists for all $t \in \mathbb{R}_{\ge 0}$ and satisfies: $$\|x(t)\| \leq \beta\big(\|x(0)\|,t\big) + \gamma \left(\displaystyle \sup_{0 \le s \le t} \|w(s)\|\right).$$
\end{definition}

\begin{definition} \cite{yu_2013_tube} \label{def:asympt_ultimately_bounded}
	A nonlinear system is \emph{asymptotically ultimately bounded} if a set of initial conditions of the system converges to a bounded set.
\end{definition}

\section{Problem Formulation} \label{sec:problem_formulation}

\subsection{System Model}

\noindent Consider a set $\mathcal{V}$ of $N$ agents, $\mathcal{V} = \{1, \dots, N\}$, operating in a workspace $\mathcal{D} \subseteq \mathbb{R}^{n}$; $\mathcal{D}$ is assumed to be a \emph{connected} set containing the origin. The \emph{uncertain nonlinear continuous dynamics} of each agent $i \in \mathcal{V}$ are given by:
\begin{align} 
\dot{x}_{i}(t) & =  f_i(x_i(t), u_i(t))+w_i(t), \label{eq:sys}
\end{align}
where $x_i(t) \in \mathcal{D}$ denotes the state of each agent; $u_i(t) \in \mathbb{R}^n$ denotes the control input; $f_i:\mathbb{R}^n \times \mathbb{R}^n \to \mathbb{R}^n$ is a continuous nonlinear vector valued function; and the term $w_i(t) \in \mathbb{R}^{n}$ represents external disturbances/uncertainties as well as unmodeled dynamics. The state $x_i(t)$ of each agent $i \in \mathcal{V}$ is assumed to be available for measurement for all times. The control inputs are assumed to satisfy: $u_i(t) \in \mathcal{U}_i \subseteq \mathbb{R}^n$, for every $t \in \mathbb{R}_{\ge 0}$, where $\mathcal{U}_i$ are convex sets containing the origin. Denote by $x_i(0) \in \mathcal{D}$ the initial condition of \eqref{eq:sys}. Assume also that the uncertainties are bounded i.e., there exists finite constants $\widetilde{w}_i$ such that:
\begin{equation} \label{eq:widetilde_w}
w_i(t) \in \mathcal{W}_i \coloneqq \{w_i \in \mathbb{R}^n: \|w_i\| \le \widetilde{w}_i \}, \forall t \in \mathbb{R}_{\ge 0}, i \in \mathcal{V}.
\end{equation}
For system \eqref{eq:sys}, define the \emph{nominal system} (without disturbances) by:
\begin{align} 
\dot{\overline{x}}_{i}(t) & =  f_i(\overline{x}_i(t), \overline{u}_i(t)), \label{eq:nominal_system}
\end{align}
where $w_i(t) = 0$, $\overline{x}_i(t) \in \mathcal{D}$ and $\overline{u}_i(t) \in \mathcal{U}_i$, for every $t \in \mathbb{R}_{\ge 0}$, $i \in \mathcal{V}$. Hereafter, we shall denote by $\overline{\cdot}$ all the nominal signals.

\begin{assumption} \label{ass:f_assumption}
	The nonlinear functions $f_i: \mathcal{D} \times \mathcal{U}_i \to \mathbb{R}^n$ are \emph{continuously differentiable} with respect both to $x_i$ and $u_i$ in $\mathcal{D} \times \mathcal{U}_i$ with $f_i(0,0) = 0$, $\forall i \in \mathcal{V}$.
\end{assumption}

\begin{assumption} \label{ass:stabilized_linear_assumption}
	The linear systems $\dot{\overline{x}}_i(t) = A_i \overline{x}_i(t) + B_i \overline{u}_i(t)$, that are the outcome of the linearization of the nominal systems \eqref{eq:nominal_system} around the equilibrium states $x_i = 0$ are stabilizable.	
\end{assumption}

\noindent Define the function $J: \mathcal{D} \times \mathcal{U}_i \to \mathbb{R}^n$ by:
\begin{equation} \label{eq:function_J_i}
J_i(x_i, u_i) \coloneqq \sum_{k = 1}^{n} \sum_{j=1}^{n} \ell_n(k) \ell_n(j)^\top \frac{\partial f_{i,k}(x_i, u_i)}{\partial u_j},
\end{equation}
where $f_{i,k}$ stands for the $k$-th component of the vector valued function $f_i$ and the vectors $\ell_n(\cdot)$ as defined in \eqref{eq:ell_vectors}.

\begin{assumption} \label{ass:lower_bound_deriv}
	We assume that there exist constants $\underline{J}_i$ such that:
	\begin{align} \label{eq:lambda_min}
	\lambda_{\min}\left[\frac{J_i(x_i, u_i) + J^\top_i(x_i, u_i)}{2}\right] \ge \underline{J}_i > 0, \forall x_i \in \mathcal{D}, u_i \in \mathcal{U}_i.
	\end{align}
\end{assumption}

\begin{remark}
	Assumptions \ref{ass:f_assumption}, \ref{ass:stabilized_linear_assumption} are standard assumptions required for the NMPC nominal stability to be guaranteed (see \cite{frank_1998_quasi_infinite}). Assumption \ref{ass:lower_bound_deriv} is a sufficient controllability condition for nonlinear systems in non-affine form (see e.g., \cite{adaptive_non_affine}).
\end{remark}

\subsection{Objectives}

Given the aforementioned modeling, the objective of each agent $i \in \mathcal{V}$ is to reach to a pre-defined desired configuration $x_{\scriptscriptstyle i, \rm des} \in \mathcal{D}$ of the workspace from any initial conditions $x_i(0) \in \mathcal{D}$. Moreover, motivated by practical applications in which agents need to stay sufficiently close in order to execute collaborative tasks, it is desired to introduce connectivity maintenance coupled constraints between the agents. For this reason, assume that:
\begin{itemize}
	\item over time $t \in \mathbb{R}_{\ge 0}$, each agent $i \in \mathcal{V}$ occupies a ball $\mathcal{B}(x_i(t), r_i)$, where $r_i \in \mathbb{R}_{>0}$ stands for the radius of the ball;
	\item each agent $i \in \mathcal{V}$ has \emph{communication capabilities} within a limited sensing range $d_i \in \mathbb{R}_{>0}$ such that:
	\begin{equation} \label{eq:d_i}
	\displaystyle d_i > \max_{i, j \in \mathcal{V}, i \neq j} \left\{r_i + r_j \right\}.
	\end{equation}
\end{itemize}
The latter implies that each agent has sufficiently large sensing radius so as to measure the agent with the biggest volume in its vicinity, due to the fact that the agents' radii are not the same. Taking the above into consideration, the neighboring set of agent $i \in \mathcal{V}$ is defined by:
\begin{equation*}
\mathcal{N}_i \coloneqq \left\{j \in \mathcal{V} \backslash \{i\} : \| x_i(0)-x_j(0) \| < d_i \right\}.
\end{equation*}
The set $\mathcal{N}_i$ is composed of indices of agents $j \in \mathcal{N}$ which are within the sensing range of agent $i$ at time $t = 0$. The proposed decentralized feedback control laws $u_i \in \mathcal{U}_i$ need to guarantee that all agents $j \neq i$ within $\mathcal{N}_i$ remain within distance $d_i$ from $i$ att all times, i.e., the connectivity of all initially connected agents is preserved for all times. For sake of cooperation needs, assume that $\mathcal{N}_i \neq \emptyset$, $\forall i \in \mathcal{V}$, i.e., all agents have at least one neighbor. Moreover, assume that the workspace $\mathcal{D}$ is sufficiently large so that the agents are able to perform the desired navigation task.

\begin{definition} \label{definition:feasibl_configurations}
	The desired configurations $x_{\scriptscriptstyle i, \rm des} \in \mathcal{D}$, $i \in \mathcal{V}$ are called \emph{feasible}, if they do not result in violation of the connectivity maintenance constraint between the agents, i.e., $\|x_{\scriptscriptstyle i, \rm des} - x_{\scriptscriptstyle j, \rm des}\| < d_i$, $\forall i \in \mathcal{V}$, $j \in \mathcal{N}_i$.
\end{definition}

\subsection{Problem Statement}

\noindent Formally, the control problem considered in this paper, is formulated as follows:

\begin{problem} \label{problem}
	Given $N$ agents governed by dynamics as in \eqref{eq:sys}, under Assumptions \ref{ass:f_assumption}- \ref{ass:lower_bound_deriv}, modeled by the balls $\mathcal{B}\left(x_i, r_i\right)$, $i \in \mathcal{V}$, and operating in a workspace $\mathcal{D}$. The agents have communication capabilities according to sensing radii $d_i$, as given in \eqref{eq:d_i}. Then, given desired feasible configurations $x_{\scriptscriptstyle i, \rm des} \in \mathcal{D}$, $i \in \mathcal{V}$ according to Definition \ref{definition:feasibl_configurations}, the problem lies in designing \emph{decentralized feedback control} laws $u_i \in \mathcal{U}_i$, such that for every $i \in \mathcal{V}$ and for all initial conditions satisfying $x_i(0) \in \mathcal{D}$ the following specifications are satisfied: 
	\begin{enumerate}
		\item navigation to the desired configurations is achieved: $\displaystyle \lim_{t \to \infty} \|x_i(t) - x_{\scriptscriptstyle i, \rm des} \| \to 0;$
		\item connectivity between neighboring agents is preserved: $\|x_i(t) - x_j(t)\| < d_i$, $\forall j \in \mathcal{N}_i$, $t \in \mathbb{R}_{\ge 0}$; and
		\item the agents remain in the workspace: $x_i(t) \in \mathcal{D}$, $\forall t \in \mathbb{R}_{\ge 0}$.
	\end{enumerate}
\end{problem}

\section{Main Results} \label{sec:main_results}

In this section, a systematic solution to Problem \ref{problem} is introduced. Due to the fact that we aim to minimize the terms $\|x_i(t) - x_{\scriptscriptstyle i, \rm des} \|$, as $t \to \infty$, subject to distance and control input constraints imposed by Problem \ref{problem}, we seek a solution which is the outcome of an decentralized optimization. In Section \ref{sec:error_dynamics} we derive the error dynamics and the distance constraints of each agent; Section \ref{sec:dec_control_design} is devoted to the proposed feedback control design; and lastly in Section \ref{sec:online_control_design} we deal with the online nominal NMPC design.

\subsection{Error Dynamics and Constraints} \label{sec:error_dynamics}

\noindent Define the uncertain error and nominal error signals $e_i: \mathbb{R}_{\ge 0} \to \mathbb{R}^n$, $\overline{e}_i : \mathbb{R}_{\ge 0} \to \mathbb{R}^n$ by:
\begin{align*}
e_i(t) & \coloneqq x_i(t)-x_{\scriptscriptstyle i, \rm des}, \notag \\
\overline{e}_i(t) & \coloneqq \overline{x}_i(t)-x_{\scriptscriptstyle i, \rm des},
\end{align*}	
respectively. Then, the corresponding \emph{uncertain error dynamics} and \emph{nominal error dynamics} are given by:
\begin{subequations}
	\begin{align}
	\dot{e}_{i}(t) & = f_i(e_i(t)+x_{\scriptscriptstyle i, \rm des}, u_i(t))+w_i(t),  \label{eq:error_system_perturbed} \\
	\dot{\overline{e}}_{i}(t) & =  f_i(\overline{e}_i(t)+x_{\scriptscriptstyle i, \rm des}, \overline{u}_i(t)). \label{eq:error_system_nominal}
	\end{align}
\end{subequations}
Define the sets that captures the \textit{state} constraints on the system \eqref{eq:sys}, posed by Problem \ref{problem} by:
\begin{align} \label{eq:constraints_X}
\mathcal{X}_{i} & \coloneqq \Big\{ x_i(t) \in \mathbb{R}^n: \ \|x_i(t) - x_j(t)\| \le d_i - \varepsilon, \forall \ j \in \mathcal{V}_i \ {\rm and} \ x_i(t) \in \mathcal{D}\Big\}, i \in \mathcal{V}, 
\end{align}
where $\varepsilon \in \mathbb{R}_{> 0}$ is an arbitrary small constant to be chosen. The two constraints in the set $\mathcal{X}_i$ refer to connectivity preservation between neighboring agents and the requirement of the agents to remain in the workspace $\mathcal{D}$ for all times, respectively. In order to translate the constraints that are dictated for the state $x_i$ into constraints regarding the error state $e_i$, define the set $$\mathcal{E}_{i} = \Big\{e_i \in \mathbb{R}^n :
e_i \in \mathcal{X}_{i} \oplus (-x_{\scriptscriptstyle i, \rm des}) \Big\}, i \in \mathcal{V}.$$ Then, the following equivalence holds: $x_i \in \mathcal{X}_i \Leftrightarrow e_i \in \mathcal{E}_i$, $\forall i \in \mathcal{V}$.	

\subsection{Feedback Control Design} \label{sec:dec_control_design}

\noindent Consider the following \emph{feedback control law:}
\begin{equation} \label{eq:control_law_u_kappa}
u_i(t) \coloneqq \overline{u}_i(t) + \kappa_i(e_i(t), \overline{e}_i(t)),
\end{equation}
which consists of a nominal control action $\overline{u}_i \in \mathcal{U}_i$ and a state feedback law $\kappa_i: \mathbb{R}^n \times \mathbb{R}^n \to \mathbb{R}^n$. As it will be presented hereafter, $\overline{u}_i$ will be the outcome of a nominal DFHOCP solved at each sampling time by each agent $i \in \mathcal{V}$; and the feedback law $\kappa_i(e_i, \overline{e}_i)$ is used to guarantee that the real trajectories $e_i(t)$ remain in bounded hyper-tubes centered among the nominal trajectories $\overline{e}_i(t)$, for all times. We will show that the volume of the hyper-tubes depends on the upper bound of the disturbances $\widetilde{w}_i$ as given in \eqref{eq:widetilde_w} as well as the bounds of the derivatives of functions $f_i$. For each agent $i \in \mathcal{V}$, denote by:
\begin{align} 
z_{i}(t) & \coloneqq e_i(t) - \overline{e}_i(t), \label{eq:error_z}
\end{align}
the deviation between the real states $e_i(t)$ of the uncertain system \eqref{eq:error_system_perturbed} and the states $\overline{e}_i(t)$ of the nominal system \eqref{eq:error_system_nominal}. Note that initially it holds that $z_{i}(0) = 0$ for every $i \in \mathcal{V}$. By using \eqref{eq:error_system_perturbed}, \eqref{eq:error_system_nominal} and \eqref{eq:error_z}, the dynamics of $z_{i}(t)$ for each agent $i \in \mathcal{V}$ are given by:
\begin{align}
\dot{z}_{i} & = \dot{e}_i - \dot{\overline{e}}_i, \notag \\
& = f_i(e_i+x_{\scriptscriptstyle i, \rm des}, u_i)+w_i - f_i(\overline{e}_i+x_{\scriptscriptstyle i, \rm des}, \overline{u}_i) \notag \\
& = f_i(e_i+x_{\scriptscriptstyle i, \rm des}, u_i) - f_i(\overline{e}_i+x_{\scriptscriptstyle i, \rm des}, u_i) +f_i(\overline{e}_i+x_{\scriptscriptstyle i, \rm des}, u_i)- f_i(\overline{e}_i+x_{\scriptscriptstyle i, \rm des}, \overline{u}_i)+w_i \notag \\
& = \Lambda_i(e_i, \overline{e}_i, u_i)+f_i(\overline{e}_i+x_{\scriptscriptstyle i, \rm des}, u_i)- f_i(\overline{e}_i+x_{\scriptscriptstyle i, \rm des}, \overline{u}_i)+w_i. \label{eq:z_dynamics}
\end{align}
In the latter, the functions $\Lambda_i:\mathcal{D} \times \mathcal{D} \times \mathcal{U}_i \to \mathbb{R}^{6}$ are defined by:
\begin{align*} 
\Lambda_i(e_i, \overline{e}_i, u_i) & \coloneqq f_i(e_i+x_{\scriptscriptstyle i, \rm des}, u_i) - f_i(\overline{e}_i+x_{\scriptscriptstyle i, \rm des}, u_i),
\end{align*}
and are upper bounded by:
\begin{align} \label{eq:L_bound}
\|\Lambda_i(e_i, \overline{e}_i, u_i)\| & \le \|f_i(e_i+x_{\scriptscriptstyle i, \rm des}, v_i, u_i) - f_i(\overline{e}_i+x_{\scriptscriptstyle i, \rm des}, v_i, u_i)\|  \notag \\
& \le L_{i} \|e_i+x_{\scriptscriptstyle i, \rm des}-\overline{e}_i -x_{\scriptscriptstyle i, \rm des}\| \notag \\
& = L_{i} \|e_i-\overline{e}_i\| \notag \\
& = L_{i} \|z_{i}\|,
\end{align}
where $L_{i} > 0$ are the Lipschitz constants of the functions $f_i$ with respect to the variables $x_i$.
\begin{figure}[t!]
	\centering
	\begin{tikzpicture}
	\draw[color = green, fill=black!5] (-0.55,-0.48) ellipse (0.35cm and 1.97cm);
	\draw[color = green, fill=black!5] (2.55,-0.15) ellipse (0.35cm and 1.97cm);
	\draw[color = green, fill=black!5] (6.25,-0.22) ellipse (0.35cm and 1.97cm);
	
	\draw[scale=0.7,domain=-0.8:9,smooth,variable=\x,blue, line width = 0.05cm] plot ({\x},{-0.005*\x*\x*\x*\x+0.0868*\x*\x*\x-0.4554*\x*\x+0.6542*\x+0.1467});
	
	\draw[scale=0.7,domain=-0.8:9,smooth,variable=\x,red, line width = 0.05cm] plot ({\x},{0.93+0.0139*\x*\x*\x*\x-0.2491*\x*\x*\x+1.2240*\x*\x-0.7324*\x-3.0411});
	
	\draw[scale=0.7,domain=-0.8:9,smooth,variable=\x,black, line width = 0.05cm, dashed] plot ({\x},{-0.005*\x*\x*\x*\x+0.0868*\x*\x*\x-0.4554*\x*\x+0.6542*\x+0.1467+2.8});
	\draw[scale=0.7,domain=-0.8:9,smooth,variable=\x,black, line width = 0.05cm, dashed] plot ({\x},{-0.005*\x*\x*\x*\x+0.0868*\x*\x*\x-0.4554*\x*\x+0.6542*\x+0.1467-2.8});
	
	\draw (6.29, -0.3) node[circle, inner sep=0.8pt, fill=black, label={below:{$ $}}] (A1) {};
	
	\draw [color=orange,thick,->,>=stealth', line width = 0.5mm](-1.5, -0.5) to (-1.5, 1.4);
	\draw [color=orange,thick,->,>=stealth', line width = 0.5mm](-1.5, 1.4) to (-1.5, -0.5);
	\draw [color=orange,thick,->,>=stealth', line width = 0.5mm](7.0, -0.3) to (7.0, -2.0);
	\draw [color=orange,thick,->,>=stealth', line width = 0.5mm](7.0, -2.0) to (7.0, -0.3);
	\node at (-1.8, 0.5) {$\widetilde{z}_i$};
	\node at (7.4, -1.2) {$\widetilde{z}_i$};
	\node at (2.6,-0.16) {$\bullet$};
	\node at (6.3,-0.27) {$\bullet$};
	\node at (-0.5,-0.50) {$\bullet$};
	
	\node at (3.9, 0.9) {$e_i(t)$};
	\node at (0.6, 0.7) {$\overline{e}_i(t)$};
	\end{tikzpicture}
	\caption{The hyper-tube of agent $i$ centered along the trajectory $\overline{e}_i(t)$ (depicted by blue line) with radius $\widetilde{z}_i$. Under the proposed control law, the real trajectory $e_i(t)$ (depicted with red line) lies inside the hyper-tube for all times, i.e., $\|z_{i}(t)\| \le \widetilde{z}_i$, $\forall t \in \mathbb{R}_{\ge 0}$.}
	\label{fig: example_01}
\end{figure}
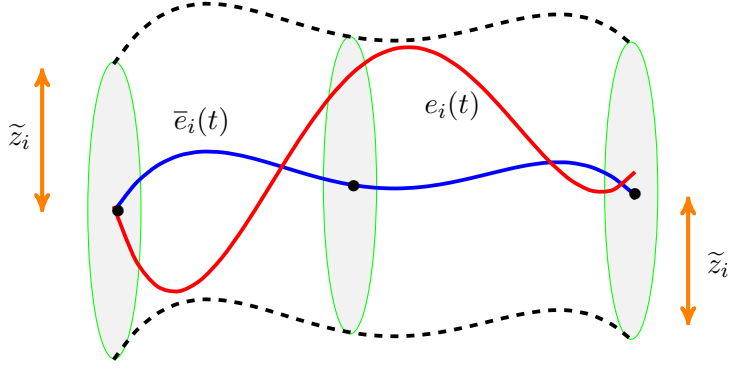
\begin{lemma} \label{leamm:proposed_RPI_set}
	The state feedback laws designed as:
	\begin{equation} \label{eq:kappa_law}
	\kappa_i(e_i, \overline{e}_i) \coloneqq -k_{i} (e_i-\overline{e}_i) = - k_{i} z_i,
	\end{equation}	
	where the control gains $k_{i} > 0$ are chosen as:
	\begin{align} 
	k_{i} \coloneqq \underline{k}_i + \frac{1}{\underline{J}_i} \left(L_i+\frac{1}{4 \rho_i}\right) > 0, \label{eq:control_gains2}
	\end{align}
	where $\underline{k}_i, \rho_i > 0$, are parameters to be appropriately tuned, render the sets:
	\begin{equation} \label{eq:Omega_set}
	\mathcal{Z}_i \coloneqq \left\{z_{i}(t) \in \mathbb{R}^n : \|z_{i}(t)\| \le  \widetilde{z}_i \coloneqq \frac{\sqrt{\rho_i} \widetilde{w}_i}{\sqrt{\underline{k}_i \underline{J}_i}}, \forall t \in \mathbb{R}_{\ge 0} \right\},
	\end{equation}
	RCI sets for the error dynamical systems \eqref{eq:z_dynamics}, according to Definition \ref{def:RPI_set}.
\end{lemma}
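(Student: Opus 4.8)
The plan is to substitute the feedback \eqref{eq:kappa_law} into the deviation dynamics \eqref{eq:z_dynamics} and then exhibit $\mathcal{Z}_i$ as a sublevel set of a quadratic Lyapunov function that the closed-loop flow cannot escape, uniformly over $w_i\in\mathcal{W}_i$. Concretely, with $u_i-\overline{u}_i = \kappa_i(e_i,\overline{e}_i) = -k_i z_i$ plugged into \eqref{eq:z_dynamics}, I would take $V_i(z_i)\coloneqq \tfrac12 z_i^\top z_i$ and compute $\dot V_i = z_i^\top\dot z_i$ as the sum of three contributions: (i) the mismatch term $z_i^\top\Lambda_i(e_i,\overline{e}_i,u_i)$; (ii) the control-difference term $z_i^\top\!\big(f_i(\overline{e}_i+x_{\scriptscriptstyle i, \rm des},u_i)-f_i(\overline{e}_i+x_{\scriptscriptstyle i, \rm des},\overline{u}_i)\big)$; and (iii) the disturbance term $z_i^\top w_i$.

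For (i) I would apply the Lipschitz estimate \eqref{eq:L_bound} together with the Cauchy--Schwarz inequality to get $z_i^\top\Lambda_i\le L_i\|z_i\|^2$. For (iii) I would invoke Lemma \ref{lemma:quadr_forms} with $M=I_n$ and $\rho=\rho_i$, combined with the disturbance bound \eqref{eq:widetilde_w}, to obtain $z_i^\top w_i\le \tfrac{1}{4\rho_i}\|z_i\|^2+\rho_i\widetilde w_i^{\,2}$. The core of the argument is (ii): here I would use the Mean Value Theorem for vector-valued functions (Proposition \ref{eq:MVT}) in the control argument of $f_i$, writing $f_i(\overline{e}_i+x_{\scriptscriptstyle i, \rm des},u_i)-f_i(\overline{e}_i+x_{\scriptscriptstyle i, \rm des},\overline{u}_i)=\widetilde J_i\,(u_i-\overline{u}_i)=-k_i\widetilde J_i z_i$, where $\widetilde J_i$ has the structure of \eqref{eq:function_J_i} with its rows evaluated at points of $\mathrm{Co}(u_i,\overline{u}_i)\subseteq\mathcal{U}_i$ (using convexity of $\mathcal{U}_i$). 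Then $z_i^\top\widetilde J_i z_i = z_i^\top\tfrac12(\widetilde J_i+\widetilde J_i^\top)z_i\ge \underline J_i\|z_i\|^2$ by Assumption \ref{ass:lower_bound_deriv}, so contribution (ii) is bounded by $-k_i\underline J_i\|z_i\|^2$. Summing the three bounds yields
\[ \dot V_i \le \Big(L_i - k_i\underline J_i + \tfrac{1}{4\rho_i}\Big)\|z_i\|^2 + \rho_i\widetilde w_i^{\,2}, \]
and substituting the gain \eqref{eq:control_gains2}, for which $k_i\underline J_i = \underline k_i\underline J_i + L_i + \tfrac{1}{4\rho_i}$, collapses the bracket exactly to $-\underline k_i\underline J_i$, giving $\dot V_i \le -\underline k_i\underline J_i\|z_i\|^2 + \rho_i\widetilde w_i^{\,2}$.

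Finally, since $\underline k_i\underline J_i>0$, this bound is strictly negative whenever $\|z_i\|>\widetilde z_i$ and non-positive on the sphere $\|z_i\|=\widetilde z_i$ (indeed it evaluates to $0$ there for $\widetilde z_i$ as in \eqref{eq:Omega_set}). Combined with $z_i(0)=0\in\mathcal{Z}_i$, a standard forward-invariance argument closes the proof: if the flow ever reached $\partial\mathcal{Z}_i$ it could not cross it, because $V_i$ is strictly decreasing on $\{V_i>\tfrac12\widetilde z_i^{\,2}\}$, so $\|z_i(t)\|\le\widetilde z_i$ for all $t\ge0$ and all $w_i\in\mathcal{W}_i$; hence $\mathcal{Z}_i$ is an RCI set for \eqref{eq:z_dynamics} under the feedback \eqref{eq:kappa_law}, in the sense of Definition \ref{def:RPI_set}.

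I expect the main obstacle to be the rigorous justification of step (ii): the vector-valued Mean Value Theorem produces a matrix $\widetilde J_i$ whose rows are Jacobian rows of $f_i$ (w.r.t.\ $u_i$) evaluated at generally distinct points, so Assumption \ref{ass:lower_bound_deriv} — which lower-bounds $\lambda_{\min}$ of the \emph{symmetrized single-point} Jacobian — must be read as holding uniformly over $\mathcal{D}\times\mathcal{U}_i$ so that $z_i^\top\tfrac12(\widetilde J_i+\widetilde J_i^\top)z_i\ge\underline J_i\|z_i\|^2$ remains valid for this mixed matrix. A secondary point requiring care is that \eqref{eq:L_bound} and all evaluation points presuppose $x_i=e_i+x_{\scriptscriptstyle i, \rm des}$, $\overline{x}_i=\overline{e}_i+x_{\scriptscriptstyle i, \rm des}$ and the segment joining them lie in $\mathcal{D}\times\mathcal{U}_i$; this is consistent with the imposed state and input constraints and with $z_i$ remaining in $\mathcal{Z}_i$, but making that dependence non-circular is precisely what keeps the statement from being immediate.
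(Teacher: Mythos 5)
Your proposal is correct and follows essentially the same route as the paper's own proof: the Lyapunov function $\tfrac12\|z_i\|^2$, the three-way split of $\dot z_i$, the Lipschitz bound \eqref{eq:L_bound}, Lemma \ref{lemma:quadr_forms} for the disturbance term, the vector-valued Mean Value Theorem plus Assumption \ref{ass:lower_bound_deriv} for the control-difference term, and the gain \eqref{eq:control_gains2} collapsing the bracket to $-\underline{k}_i\underline{J}_i$, followed by invariance from $z_i(0)=0$ (the paper invokes Theorem \ref{teheorem:uub_theorem} where you use a direct forward-invariance argument, an immaterial difference). Your caveat about the symmetrized Jacobian having rows evaluated at distinct points $\xi_{i,k}$ is a genuine subtlety that the paper's proof also glosses over by writing $J_i(\cdot,\xi_{i,k})$ as if it were a single-point evaluation.
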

\begin{proof}
	The proof is given in Appendix \ref{appendix:proof_lemma_proposed_RPI_set}.
\end{proof}

The aforementioned result is crucial in robust NMPC frameworks. As it will be shown hereafter, the nominal control actions $\overline{u}_i(t) \in \mathcal{U}_i$ will be the solution of a repetitive DFHOCP solved at every sample time from each agent $i \in \mathcal{V}$. The nominal trajectories $\overline{e}_i(t)$ computed by the DFHOCP define a hyper-tube in $\mathcal{D}$ centered along them with radius $\widetilde{z}_i$, for every $t \in \mathbb{R}_{>0}$ and $i \in \mathcal{V}$ (see Figure \ref{fig: example_01}). By using \eqref{eq:control_law_u_kappa}, \eqref{eq:kappa_law}, the \emph{closed-loop system} of each agent is written as:
\begin{equation} \label{eq:closed_loop_system}
\dot{e}_i(t) = f_i\Big(e_i(t), \overline{u}_i(t) - k_i \left[e_i(t)-\overline{e}_i(t)\right] \Big) + w_i(t), i \in \mathcal{V}.
\end{equation} 

\begin{remark}
	The volume of the hyper-tubes depends on the bound of the disturbances $\widetilde{w}_i$, the Lipschitz constant $L_{i}$ of functions $f_i$ as well as the constants $\underline{J}_i$. Moreover, by tuning the parameters $\underline{k}_i$, $\rho_i$ appropriately, we can adjust the volume of the tube.
\end{remark}

\subsection{Decentralized Online Control Design of $\overline{u}_i(t)$} \label{sec:online_control_design}

Consider a sequence of sampling times $\{t_k\}$, $k \in \mathbb{N}$, with a constant sampling time $\delta$, $0 < \delta < T$, where $T$ is the prediction horizon, such that $t_{k+1} \coloneqq t_k + \delta$, $\forall k \in \mathbb{N}$. In sampled data NMPC, a DFHOCP is solved at discrete sampling time instants $t_k$, by each agent $i \in \mathcal{V}$, based on the current state error measurement $\overline{e}_i(t_k)$. The solution is an optimal control signal $\overline{u}_{i}^{\star}(s)$, computed over $s \in [t_k, t_k+T_p]$. The open-loop input signal applied in between the sampling instants is given by the solution of the following DFHOCP:
\begin{subequations}
	\begin{align}
	&\hspace{-1mm}\min\limits_{\overline{u}_i(\cdot)} J_i(\overline{e}_i(t_k), \overline{u}_i(\cdot)) \notag \\ 
	&\hspace{-1mm}= \min\limits_{\overline{u}_i(\cdot)} \left\{  V_i(\overline{e}_i(t_k+T)) + \int_{t_k}^{t_k+T} \Big[ F_i(\overline{e}_i(s), \overline{u}_i(s)) \Big] ds \right\}  \label{eq:mpc_cost_function} \\
	&\hspace{-1mm}\text{subject to:} \notag \\
	&\hspace{1mm} \dot{\overline{e}}_i(s) = f_i(\overline{e}_i(s)+x_{\scriptscriptstyle i, \text{des}}, \overline{u}_i(s)), \overline{e}_i(t_k) = e_i(t_k), \label{eq:diff_mpc} \\
	&\hspace{1mm} \overline{e}_i(s) \in \overline{\mathcal{E}}_i, \overline{u}_i(s) \in \overline{\mathcal{U}}_i, s \in [t_k,t_k+T], \label{eq:mpc_constrained_set} \\
	&\hspace{1mm} \overline{e}_i(t_k+T)\in \mathcal{F}_i. \label{eq:mpc_terminal_set}
	\end{align}
\end{subequations}
At a generic sampling time $t_k$, agent $i \in \mathcal{V}$ solves the aforementioned DFHOCP. This means that $\overline{e}_i(\cdot)$ is the solution to \eqref{eq:diff_mpc} driven by the control input $\overline{u}_i(\cdot) : [t_k, t_k + T] \to \overline{\mathcal{U}}_i$ with initial condition $e_i(t_k)$. The functions $F_i: \mathbb{R}^n \times \mathbb{R}^n \to \mathbb{R}_{\geq 0}$, $V_i: \mathbb{R}^n \to \mathbb{R}_{\geq 0}$ stand for the \emph{running cost} and the \emph{terminal penalty cost}, respectively, and they are defined by: $F_i(e_i, u_i) \coloneqq e_i^{\top} Q_i e_i + u_i^{\top} R_i u_i$ and $V_i(e_i) = e_i^{\top} P_i e_i$, respectively; $R_i \in \mathbb{R}^{n \times n}$ and $Q_i$, $P_i \in \mathbb{R}^{n \times n}$ are positive definite gain matrices to be appropriately tuned; $Q_i \in \mathbb{R}^{n \times n}$ is a positive semi-definite gain matrix to be appropriately tuned. The sets $\mathcal{F}_i$ are the terminal sets that are used to force the stability of the multi-agent system and will be explained later. 

We will explain hereafter the form of the sets $\overline{\mathcal{E}}_i$ and $\overline{\mathcal{U}}_i$. In order to guarantee that while each agent $i \in \mathcal{V}$ is solving the DFHOCP \eqref{eq:mpc_cost_function}-\eqref{eq:mpc_terminal_set} for its nominal system \eqref{eq:error_system_nominal}, the real system trajectories $e_i(t)$, which are the solution of \eqref{eq:error_system_perturbed} as well as the control inputs $u_i(t)$ satisfy the state and input constraints $\mathcal{E}_i$ and $\mathcal{U}_i$, respectively, the sets $\mathcal{E}_i$ and $\mathcal{U}_i$ needs to be properly modified. Due to the fact that $\mathcal{Z}_i$ are RCI sets of the error dynamics \eqref{eq:z_dynamics}, under the control law \eqref{eq:control_law_u_kappa}, \eqref{eq:kappa_law}, it holds that:
\begin{align}
e_i(s)-\overline{e}_i(s) \in \mathcal{Z}_i & \Rightarrow e_i(s) \in \mathcal{Z}_i \oplus \left\{ \overline{e}_i(s) \right\}, \forall s \in [t_k, t_k+T], i \in \mathcal{V}. \label{eq:prop_sets_item1}
\end{align}
Combining the latter with the fact that $e_i(s)$ needs to satisfy the state constraints $\mathcal{E}_i$, $\forall s \in [t_k, t_k+T]$, $i \in \mathcal{V}$, the state constraints set of each agent are modified as: $$\overline{\mathcal{E}}_i \coloneqq \mathcal{E}_i \ominus \mathcal{Z}_i.$$ Moreover, by using \eqref{eq:prop_sets_item1} as well as \eqref{eq:control_law_u_kappa} and \eqref{eq:kappa_law}, we have:
\begin{align}
u_i(s) - \overline{u}_i(s) = (-k_i) \left(e_i(s) - \overline{e}_i(s)\right) & \in (-k_i) \circ \mathcal{Z}_i, \forall s \in [t_k, t_k + T] \notag \\ 
\Rightarrow u_i(s) & \in \left[(-k_i) \circ \mathcal{Z}_i \right] \oplus \overline{u}_i(s), \forall s \in [t_k, t_k+T], i \in \mathcal{V}. \label{eq:prop_sets_item2}
\end{align}
Combining the latter with the fact that $u_i(s)$ needs to satisfy the input constraints $\mathcal{U}_i$, $\forall s \in [t_k, t_k+T]$, $i \in \mathcal{V}$, the input constraints set of each agent is modified as: $$\overline{\mathcal{U}}_i \coloneqq \mathcal{U}_i \ominus \left[ (-k_i) \circ \mathcal{Z}_i \right].$$ Intuitively, the sets $\mathcal{E}_i$ and $\mathcal{U}_i$ are tightened in order to guarantee that while the nominal trajectory $\overline{e}_i(t)$ and the nominal input $\overline{u}_i$ are calculated, the corresponding real trajectory $e_i(t)$ and input $u_i(t)$ satisfy the state and input constraints $\mathcal{U}_i$, $\mathcal{E}_i$, respectively. The advantage of the tube-based frameworks compared to other robust NMPC approaches, is that the constraint tightening is performed off-line, it does not depend of the length of the horizon, but it depends only on the radius of the hyper-tubes $\widetilde{z}_i$ and the control gains $k_i$.

Due to the fact that the connectivity between initially connected agents, i.e, $j \in \mathcal{N}_i$, $\forall i \in \mathcal{V}$, needs to be preserved and the agents have communication capabilities within the sensing range $d_i$ as given in \eqref{eq:d_i}, we adopt here the decentralized procedure depicted in Algorithm $1$ and explained hereafter. Assume that each agent knows its labeling number of the set $\mathcal{V}$. After each sampling time $t_k$, $\forall k \ge 0$ that agent $i$ solves its own DFHOCP and obtains the estimated open-loop trajectory $\overline{e}_i(s)$, $s \in [t_k, t_k+T]$, it transmits it to its neighboring agents $j \in \mathcal{N}_i$. Then, agents' $j \in \mathcal{N}_i$ hard constraints $\overline{\mathcal{E}}_{j}$ are updated by incorporating the predicted trajectory of agent $i$, i.e., $\overline{e}_i(s)$, $s \in [t_k, t_k+T]$. Among all agents $j \in \mathcal{N}_i$, the one with higher priority, i.e., smaller labeling number in the set $\mathcal{V}$, solves its own DFHOCP (for example, agent $2$ has higher priority than agents $3$, $4$, $\dots$). This \emph{sequential procedure} is continued until all agents $i \in \mathcal{V}$ solve their own DFHOCP, and then the sampling time is updated. We will show thereafter that by adopting the aforementioned sequential communication procedure, and given that at $t = 0$ the DFHOCP \eqref{eq:mpc_cost_function} - \eqref{eq:mpc_terminal_set} of all agents are feasible, the agents are navigated to their desired configurations $x_{\scriptscriptstyle i, \rm des}$, while all distance and input constraints imposed by Problem \ref{problem} are satisfied. Note that exchanging open-loop predicted trajectories between neighboring agents is an approach adopted earlier in decentralized multi-agent MPC frameworks (see e.g., \cite{muller2012cooperative}).

\begin{algorithm}[t!]
	\caption{Information exchange procedure within a horizon $T$}
	\begin{algorithmic}[1]
		\STATE $s \in [t_k, t_k+T]$;
		\WHILE {$\mathcal{V} \neq \emptyset$}
		\FOR {$i \in \mathcal{V}$}
		\STATE Solve DFHOCP \eqref{eq:mpc_cost_function} - \eqref{eq:mpc_terminal_set} for agent $i$;
		\STATE Transmit $\overline{e}_i(s)$, $s \in [t_k, t_k+T]$ to all neighbors $j \in \mathcal{N}_i$;
		\FOR {$j \in \mathcal{N}_i$}
		\STATE Update $\overline{\mathcal{E}}_j$; \hspace{17mm} \COMMENT{Agent $j$ has acess to open loop predictions of agent $i$}
		\STATE Solve DFHOCP \eqref{eq:mpc_cost_function} - \eqref{eq:mpc_terminal_set} for agent $j$;
		\IF {$\mathcal{N}_j \backslash \{i\} = \emptyset$} 
		\STATE $\mathcal{V} \leftarrow \mathcal{V} \backslash \{j\}$; \hspace{49mm} \COMMENT{Agent $j$ has no other neighbors}
		\ENDIF
		\ENDFOR 
		\ENDFOR
		\STATE $\mathcal{V} \leftarrow \mathcal{V} \backslash \{i\}$;
		\ENDWHILE
		\STATE $t_k \leftarrow t_k + \delta$; \hspace{41mm} \COMMENT{Update sampling time and run the procedure}
		\STATE \textbf{Go to} ``line $1$";   
	\end{algorithmic} 
\end{algorithm}

The nominal controller $\overline{u}_i(\cdot)$ of each agent $i \in \mathcal{V}$ is calculated online. The state feedback control law $\kappa_i(e_i, \overline{e}_i)$ defined in \eqref{eq:kappa_law}, is obtained offline, and keeps the trajectories of the error system \eqref{eq:z_dynamics} in a hyper-tube centered along the nominal trajectory $\overline{e}_i(s)$ with radius $\widetilde{z}_i$ as given in \eqref{eq:Omega_set}. We will show hereafter that the proposed control law \eqref{eq:control_law_u_kappa}, where $\overline{u}_i(\cdot)$ is the solution of the DFHOCP \eqref{eq:mpc_cost_function} - \eqref{eq:mpc_terminal_set} for the nominal system \eqref{eq:error_system_nominal}, navigates each agent $i \in \mathcal{V}$ with dynamics as in \eqref{eq:sys} to its desired configuration $x_{\scriptscriptstyle i, \rm des}$, for all $w_i(t) \in \mathcal{W}_i$, $t \in \mathbb{R}_{\ge 0}$. 

The solution to DFHOCP \eqref{eq:mpc_cost_function} - \eqref{eq:mpc_terminal_set} at time $t_k$ provides an optimal control input, denoted by
$\overline{u}_i^{\star}(s;\ e_i(t_k))$, $s \in [t_k, t_k + T]$. This control input is then applied to the system until the next sampling instant:
\begin{align}
\overline{u}_i(s; \overline{e}_i(t_k)) \coloneqq \overline{u}_i^{\star}\big(s; \ e_i(t_k)\big),\  s \in [t_k, t_k+\delta).
\label{eq:position_based_optimal_u_2}
\end{align}
At time $t_{k+1} = t_k + \delta$ a new DFHOCP is solved in the same manner, leading to a receding horizon approach.  Algorithm $2$ (from \cite{yu_2013_tube}) depicts the procedure of how the proposed control law is calculated and applied to the real system. The solution of \eqref{eq:error_system_perturbed} at time $s$, $s \in [t_k, t_k+T]$, starting at time $t_k$, from an initial condition $\overline{e}_i(t_k) = e_i(t_k)$, by application of the control input $u_i: [t_k, s] \to \overline{\mathcal{U}}_i$ as in \eqref{eq:overal_control_input_system}, is denoted by $e_i\big(s;\ u_i(\cdot), \overline{e}_i(t_k)\big)$, $s \in [t_k, t_k+T]$. The \textit{predicted} state of the system \eqref{eq:diff_mpc} at time $s$ based on the measurement of the state at time $t_k$, $\overline{e}_i(t_k)$, by application of the control input $\overline{u}_i\big(t;\ \overline{e}_i(t_k)\big)$ as in \eqref{eq:position_based_optimal_u_2}, is denoted by $\overline{e}_i\big(s;\ \overline{u}_i(\cdot), \overline{e}_i(t_k)\big)$, $s \in [t_k, t_k+T]$. The overall applied control input for the actual system \eqref{eq:sys} during the sampling interval consequently is:
\begin{equation} \label{eq:overal_control_input_system}
u_i(s) = \overline{u}_i(s; \overline{e}_i(t_k)) + \kappa_i(e_i(s), \overline{e}_i\big(s;\ \overline{u}_i(\cdot), \overline{e}_i(t_k))), s \in [t_k, t_k + \delta),
\end{equation}
where $\overline{u}_i(s; \overline{e}_i(t_k))$ is the optimal input given by \eqref{eq:position_based_optimal_u_2}.

\begin{definition}  \label{definition:admissible_input_with_disturbance}
	A control input $\overline{u}_i : [t_k, t_k + T] \to \mathbb{R}^n$ for a state $e_i(t_k)$ of agent $i \in \mathcal{V}$ is called \textit{admissible} for the DFHOCP \eqref{eq:mpc_cost_function}-\eqref{eq:mpc_terminal_set} if the following hold:
	\begin{enumerate}
		\item $\overline{u}_i(\cdot)$ is piecewise continuous;
		\item $\overline{u}_i(s) \in \overline{\mathcal{U}}_i,\ \forall s \in [t_k, t_k + T]$;
		\item $\overline{e}_i\big(t_k + s;\ \overline{u}_i(\cdot), \overline{e}_i(t_k)\big) \in \overline{\mathcal{E}}_i,\ \forall s \in [0, T]$; and
		\item $\overline{e}_i\big(t_k + T;\ \overline{u}_i(\cdot), \overline{e}_i(t_k)\big) \in \mathcal{F}_i$.
	\end{enumerate}
\end{definition}

Define the terminal set of each agent $i \in \mathcal{V}$ as: $\mathcal{F}_i \coloneqq \{\bar{e}_i \in \overline{\mathcal{E}}_i: V_i(\overline{e}_i) \le \eta_i\} \subseteq \overline{\mathcal{E}}_i$, with $\eta_i > 0$. Then, according to Assumption \ref{ass:stabilized_linear_assumption}, (we refer the reader to \cite{frank_1998_quasi_infinite} for more details), there exists a \emph{local admissible feedback law} $u_{i, \text{loc}}(\overline{x})$ which guarantees that:
\begin{enumerate}
	\item $u_{i, \text{loc}}(\overline{e}_i) \in \overline{\mathcal{U}}_i$, for every $\overline{e}_i \in \mathcal{F}_i$;
	\item $\displaystyle \frac{\partial V_i(\overline{e}_i)}{\partial \overline{e}_i} f_i(\overline{e}_i, u_{i, \text{loc}}(\overline{e}_i)) + F_i(\overline{e}_i, u_{i, \rm loc}(\overline{e}_i)) \le 0$, for every $\overline{e}_i \in \mathcal{F}_i$.
\end{enumerate}

The methodology under which the constant $\eta_i > 0$ as well as the local controller $u_{i, \text{loc}}$ are chosen can be found in \cite{frank_1998_quasi_infinite}. Under these considerations, we can now state the theorem that guarantees the stability of the system \eqref{eq:sys}, under the proposed control law \eqref{eq:control_law_u_kappa}, \eqref{eq:kappa_law} for all initial conditions $e_i(0) \in \mathcal{E}_i$, $i \in \mathcal{V}$.

\begin{algorithm}[t!]
	\caption{Implementation of feedback control laws $u_i(t)$, $i \in \mathcal{V}$}
	\begin{algorithmic}
		\STATE \textbf{Step $\mathbf{0}$:} At time $t_0 \coloneqq 0$, set $e_i(0) = \overline{e}_i(0)$ where $e_i(0)$ is the current state.
		\STATE \textbf{Step $\mathbf{1}$:} At time $t_k$ and current state $(e_i(t_k), \overline{e}_i(t_k))$, solve DFHOCP \eqref{eq:mpc_cost_function}-\eqref{eq:mpc_terminal_set} to obtain the nominal control action $\overline{u}_i(t_k)$ and the actual control action $u_i(t_k) = \overline{u}_i(t_k)+\kappa_i(e_i(t_k), \overline{e}_i(t_k))$.
		\STATE \textbf{Step $\mathbf{2}$:} Apply the control $u_i(t_k)$ to the system \eqref{eq:error_system_perturbed}, during sampling interval $[t_k, t_{k+1})$, where $t_{k+1} = t_{k}+\delta$.
		\STATE \textbf{Step $\mathbf{3}$:} Measure the state $e_i(t_{k+1})$ at the next time instant $t_{k+1}$ of the system \eqref{eq:error_system_perturbed} and compute the successor state $\overline{e}_i(t_{k+1})$ of the nominal system \eqref{eq:error_system_nominal} under the nominal control action $\overline{u}_i(t_k)$.
		\STATE \textbf{Step $\mathbf{4}$:} Set $(e_i(t_k), \overline{e}_i(t_k)) \leftarrow (e_i(t_{k+1}), \overline{e}_i(t_{k+1}))$, $t_{k} \leftarrow t_{k+1}$; 
		
		\textbf{Go to} \textbf{Step $\mathbf{1}$}.
	\end{algorithmic} 
\end{algorithm}

\begin{theorem} \label{theorem}
	\label{theorem:with_disturbances}
	Suppose that Assumptions \ref{ass:f_assumption}-\ref{ass:lower_bound_deriv} hold. Suppose also that at time $t = 0$ the DFHOCP \eqref{eq:mpc_cost_function}-\eqref{eq:mpc_terminal_set} sequentially solved by all the agents $i \in \mathcal{V}$, is feasible. Then, the proposed decentralized feedback control law \eqref{eq:control_law_u_kappa}, \eqref{eq:kappa_law}, renders the closed-loop system \eqref{eq:closed_loop_system} of each agent $i \in \mathcal{V}$ Input to State Stable with respect to $w_i(t) \in \mathcal{W}_i$, for every initial conditions $e_i(0) \in \mathcal{E}_i$.
\end{theorem}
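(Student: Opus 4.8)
The plan is to assemble the three standard pillars of a tube-based robust-MPC proof --- a \emph{tube (RCI) bound} on the error between the real and nominal trajectories, \emph{recursive feasibility} of the sequential DFHOCP, and \emph{nominal asymptotic stability} of the receding-horizon loop --- and then to glue them through the triangle inequality $\|e_i(t)\|\le\|\overline{e}_i(t)\|+\|z_i(t)\|$ with $z_i=e_i-\overline{e}_i$ as in \eqref{eq:error_z}. The first pillar is already available: Lemma~\ref{leamm:proposed_RPI_set} guarantees that under the feedback term $\kappa_i=-k_iz_i$ the set $\mathcal{Z}_i$ is RCI for the deviation dynamics \eqref{eq:z_dynamics}, so $\|z_i(t)\|\le\widetilde{z}_i$ for all $t\ge 0$ and all admissible disturbances, with $z_i(0)=0$. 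I would additionally re-examine the Lyapunov estimate behind that lemma with $\sup_{0\le s\le t}\|w_i(s)\|$ replacing the a priori bound $\widetilde{w}_i$; this turns the tube radius into $\gamma_i\!\left(\sup_{0\le s\le t}\|w_i(s)\|\right)$ with $\gamma_i(r)\coloneqq\frac{\sqrt{\rho_i}}{\sqrt{\underline{k}_i\,\underline{J}_i}}\,r\in\mathcal{K}$, which is precisely the gain that must appear in Definition~\ref{def:ISS}.

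For recursive feasibility I would use a double induction --- over the sampling instants $t_k$, and inside each instant over the priority ordering of Algorithm~1. Assuming the DFHOCP \eqref{eq:mpc_cost_function}--\eqref{eq:mpc_terminal_set} of every agent is feasible at $t_k$ with optimizer $(\overline{e}_i^\star(\cdot),\overline{u}_i^\star(\cdot))$, the candidate at $t_{k+1}$ is the usual one: the time shift of $\overline{u}_i^\star$ on $[t_{k+1},t_k+T]$ concatenated with the local controller $u_{i,\text{loc}}$ on $[t_k+T,t_{k+1}+T]$. Since the nominal state is propagated exactly between samples (Step~3 of Algorithm~2), the shifted nominal trajectory coincides with $\overline{e}_i^\star$ on the overlap and, by the terminal properties listed after Definition~\ref{definition:admissible_input_with_disturbance} (forward invariance of $\mathcal{F}_i$ under $u_{i,\text{loc}}$, and $u_{i,\text{loc}}(\overline{e}_i)\in\overline{\mathcal{U}}_i$ on $\mathcal{F}_i$), stays in $\mathcal{F}_i\subseteq\overline{\mathcal{E}}_i$ on the appended piece --- so the candidate is admissible in the sense of Definition~\ref{definition:admissible_input_with_disturbance}. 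The genuinely new point is the coupling: when a lower-priority neighbour $j$ re-solves, its set $\overline{\mathcal{E}}_j$ has already been refreshed with the \emph{new} predicted trajectories of its higher-priority neighbours; but each of those is itself only a time shift of the one in force at $t_k$, so the coupled connectivity constraints defining $\mathcal{X}_i$ in \eqref{eq:constraints_X}, together with the slack $\varepsilon$, remain satisfied and agent $j$'s candidate stays admissible --- the induction hypothesis propagates through Algorithm~1. Feasibility at $t_0$ is assumed. Constraint satisfaction for the real closed loop \eqref{eq:closed_loop_system} then follows at once from Property~\ref{prop:set_prop}: $\overline{e}_i(t)\in\overline{\mathcal{E}}_i=\mathcal{E}_i\ominus\mathcal{Z}_i$ and $z_i(t)\in\mathcal{Z}_i$ give $e_i(t)\in\mathcal{E}_i$, i.e.\ $x_i(t)\in\mathcal{D}$ and $\|x_i(t)-x_j(t)\|<d_i$ for $j\in\mathcal{N}_i$; likewise $\overline{u}_i(t)\in\mathcal{U}_i\ominus[(-k_i)\circ\mathcal{Z}_i]$ and $u_i(t)-\overline{u}_i(t)=-k_iz_i(t)\in(-k_i)\circ\mathcal{Z}_i$ give $u_i(t)\in\mathcal{U}_i$.

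For nominal stability I would take the optimal value function $J_i^\star(\overline{e}_i(t_k))$ as a Lyapunov function for the sampled-data nominal loop, as in the quasi-infinite-horizon argument of \cite{frank_1998_quasi_infinite}: inserting the shifted candidate into the cost and using the terminal inequality $\frac{\partial V_i(\overline{e}_i)}{\partial\overline{e}_i}f_i(\overline{e}_i,u_{i,\text{loc}}(\overline{e}_i))+F_i(\overline{e}_i,u_{i,\text{loc}}(\overline{e}_i))\le 0$ on $\mathcal{F}_i$ yields the descent estimate $J_i^\star(\overline{e}_i(t_{k+1}))-J_i^\star(\overline{e}_i(t_k))\le-\int_{t_k}^{t_{k+1}}F_i(\overline{e}_i^\star(s),\overline{u}_i^\star(s))\,ds\le-\lambda_{\min}(Q_i)\int_{t_k}^{t_{k+1}}\|\overline{e}_i^\star(s)\|^2\,ds$. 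Together with the quadratic bounds on $F_i$ and $V_i$ and continuity of $J_i^\star$ at the origin, this renders $\overline{e}_i=0$ asymptotically stable for the sampled-data nominal system; extending the estimate across the sampling intervals by continuous dependence of solutions of \eqref{eq:error_system_nominal} on initial data produces a class-$\mathcal{KL}$ bound $\|\overline{e}_i(t)\|\le\beta_i(\|\overline{e}_i(0)\|,t)$.

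Finally I would combine the pieces: since $e_i(0)=\overline{e}_i(0)$ (Step~0 of Algorithm~2), the triangle inequality gives $\|e_i(t)\|\le\|\overline{e}_i(t)\|+\|z_i(t)\|\le\beta_i(\|e_i(0)\|,t)+\gamma_i\!\left(\sup_{0\le s\le t}\|w_i(s)\|\right)$, which is exactly the ISS estimate of Definition~\ref{def:ISS}; in the disturbance-free case $\gamma_i\equiv 0$ and $\|x_i(t)-x_{\scriptscriptstyle i, \rm des}\|=\|e_i(t)\|\to 0$, while the connectivity and workspace requirements of Problem~\ref{problem} hold for all $t$ as recorded above. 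I expect the recursive-feasibility step to be the main obstacle: unlike single-agent tube MPC, an agent's hard constraints are modified \emph{within} a single sampling period by its neighbours through Algorithm~1, so the crux is to show the priority-ordered cascade of re-optimizations never destroys admissibility --- and this is exactly where feasibility of the desired configurations (Definition~\ref{definition:feasibl_configurations}) and the margin $\varepsilon$ in \eqref{eq:constraints_X} genuinely enter.
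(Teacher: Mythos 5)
Your proposal follows essentially the same route as the paper's proof: Lemma~\ref{leamm:proposed_RPI_set} gives the RCI tube bound $\|z_i(t)\|\le\widetilde{z}_i$ (which the paper likewise converts into a class-$\mathcal{K}$ gain $\gamma_i$ on $\sup_{0\le s\le t}\|w_i(s)\|$), recursive feasibility is established by the shifted-input-plus-terminal-controller candidate together with Property~\ref{prop:set_prop} to recover real-state constraint satisfaction from $\overline{\mathcal{E}}_i=\mathcal{E}_i\ominus\mathcal{Z}_i$, nominal asymptotic stability is imported from the quasi-infinite-horizon argument of \cite{frank_1998_quasi_infinite}, and the ISS estimate follows from the triangle inequality with $e_i(0)=\overline{e}_i(0)$. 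If anything you are more explicit than the paper on the one delicate point --- whether the priority-ordered re-optimizations of Algorithm~1 can invalidate a neighbour's shifted candidate --- which the paper's feasibility appendix treats as a purely single-agent argument and does not address.
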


\begin{proof}
	The proof of the theorem consists of two parts: 
	
	\noindent \textbf{Recursive Feasibility :} It will be shown that recursive feasibility is established, and it implies subsequent feasibility. The feasibility proof can be found in Appendix \ref{app:feasibility_analysis};
	
	\noindent \textbf{Convergence Analysis :} The convergence analysis is given in Appendix \ref{app:convergence_analysis}. 
\end{proof}

\begin{remark}
	Assumption \ref{ass:f_assumption} - \ref{ass:stabilized_linear_assumption} as well as communication capabilities among the agents are standard assumptions in order for the nominal stability of decentralized NMPC schemes to be guaranteed. We refer the reader to \cite{frank_1998_quasi_infinite, muller2012cooperative} for more details.
\end{remark}

\section{Simulation Results} \label{sec:simulation_results}

\noindent For a simulation scenario, consider $N = 3$ agents $\mathcal{V} = \{1,2,3\}$ with uncertain non-affine nonlinear dynamics given as follows:
\begin{align*}
\dot{x}_{i,1} & = \frac{0.1-0.1e^{-x_{i,2}}}{1+e^{-x_{i,2}}} + 0.25 x_{i,1}^2 + 2 u_{i,1} +0.3 \cos(t), \notag \\
\dot{x}_{i,2} & = 0.25 x_{i,1}^2 + u_{i,2} + 0.1 u_{i,2}^{3} + 0.2 \sin(2t), \notag
\end{align*} 
where: 
\begin{align*}
x_i & = [x_{i,1}, x_{i,2}]^\top \in \mathbb{R}^2, \notag \\ 
u_i & =[u_{i,1}, u_{i,2}]^\top \in \mathbb{R}^2, \notag \\
f_i(x_i, u_i) & = 
\begin{bmatrix}
f_{i,1}(x_i,u_i) \\
f_{i,2}(x_i,u_i)
\end{bmatrix}
=
\begin{bmatrix}
\frac{0.1-0.1e^{-x_{i,2}}}{1+e^{-x_{i,2}}} + 0.25 x_{i,1}^2 + 2 u_{i,1} \\
0.25 x_{i,1}^2 + u_{i,2} + 0.1 u_{i,2}^{3}
\end{bmatrix}, \notag \\
w_i(t) & = [0.3 \cos(t), 0.2 \sin(2t)]^\top,
\end{align*}
with $\|w_i(t)\| \le 0.3 = \widetilde{w}_i$, $\forall t \in \mathbb{R}_{\ge 0}$. From \eqref{eq:function_J_i} we get:
\begin{align*}
J_i(x_i, u_i) & = \sum_{k = 1}^{2} \sum_{j=1}^{2} \ell_n(k) \ell_n(j)^\top \frac{\partial f_{i,k}(x_i, u_i)}{\partial u_j} \notag \\
& =
\begin{bmatrix}
2 & 0 \\
0 & 1 + 0.3 u_{i,2}^2
\end{bmatrix},
\end{align*}
with $\lambda_{\min}\left(\frac{J_i+J_i^\top}{2}\right) \ge \underline{J}_i = 1$. The agents are operating in a workspace $\mathcal{D} = \{x_i \in \mathbb{R}^2 : -5 \le x_{i,1}, x_{i,2} \le 5\}$ with $L_i = 2.5$. The radius and the sensing range of all agents are set to $r_i = 1$ and $d_i = 5$, respectively. We set $\varepsilon = 0.01$, where $\varepsilon$ is the parameter of the constraints set \eqref{eq:constraints_X}. The sensing radii result to the following neighboring sets: $\mathcal{N}_1 = \{2\} = \mathcal{N}_3$ and $\mathcal{N}_2 = \{1,3\}$. The agents' initial positions are $x_{1}(0) = [-3.0, 2.9]^\top$, $x_{2}(0) = [-2.5, -0.2]^\top$ and $x_{3}(0) = [-2.9, -4]^\top$. Their corresponding desired configurations are $x_{1, \text{des}} = [0.1206, 1.1155]^\top$, $x_{2, \text{des}} = [2.0, 0.0]^\top$ and $x_{3, \text{des}} = [0.9, -2.8]^\top$. According to Definition \ref{definition:feasibl_configurations}, the above configurations are feasible since it holds that: $\|x_{\scriptscriptstyle i, \rm des} - x_{\scriptscriptstyle j, \rm des}\| < d_i$, $\forall i \in \mathcal{V}$, $j \in \mathcal{N}_i$ .The sampling time and the total execution time are $\delta = 0.1$ and $10 \sec$, respectively. The control gains are chosen as $\rho_i = \underline{k}_i = 1$ and $k_i = 3.75$, which result to a tube of radius $\widetilde{z}_i = 0.3$. The matrices $Q_i$, $R_i$ and $P_i$ are set to $0.5 I_2$. The initial error constraints of each agent are given as:
\begin{align}
\mathcal{E}_1 & = \{e_1 \in \mathbb{R}^2 : -5.1206 \le e_{1,1} \le 4.8794, -6.1155 \le e_{1,2} \le 3.8845 \}, \notag \\
\mathcal{E}_2 & = \{e_2 \in \mathbb{R}^2 : -7.0 \le e_{2,1} \le 3.0, -5.0 \le e_{2,2} \le 5.0 \}, \notag \\
\mathcal{E}_3 & = \{e_3 \in \mathbb{R}^2 : -5.9 \le e_{3,1} \le 4.1, -2.2 \le e_{3,2} \le 2.2 \}, \notag
\end{align}
and the corresponding modified error constrains which are used for the solution of the online NMPC as:
\begin{align}
\overline{\mathcal{E}}_1 & = \{e_1 \in \mathbb{R}^2 : -4.8206 \le e_{1,1} \le 4.5794, -5.8155 \le e_{1,2} \le 3.5845 \}, \notag \\
\overline{\mathcal{E}}_2 & = \{e_2 \in \mathbb{R}^2 : -6.7 \le e_{2,1} \le 2.7, -4.7 \le e_{2,2} \le 4.7 \}, \notag \\
\overline{\mathcal{E}}_3 & = \{e_3 \in \mathbb{R}^2 : -5.6 \le e_{3,1} \le 3.8, -1.9 \le e_{3,2} \le 1.9 \}. \notag
\end{align}
The input constraints of each agent are set to: $$\mathcal{U}_i = \{u_i \in \mathbb{R}^{2} : -2.125 \le u_{i,1}, u_{i,2} \le 2.125\}, i \in \mathcal{V}.$$ The corresponding modified input constraint sets for the online NMPC are given as: $$\overline{\mathcal{U}}_i = \{\overline{u}_i \in \mathbb{R}^2: -1 \le \overline{u}_{i,1}, \overline{u}_{i,2} \le 1\}, i \in \mathcal{V}.$$

Fig. \ref{fig:tube1}, \ref{fig:tube2} and \ref{fig:tube3} depict the evolution of the real and the nominal trajectories as well as the tubes of agents $1$, $2$ and $3$ respectively. The tubes are centered along the nominal trajectory of each agent and the real trajectory always remain within the tubes for all times. Fig. \ref{fig:error1}, \ref{fig:error2} and \ref{fig:error3} represent the evolution of the error signals $e_1(t)$, $e_2(t)$ and $e_3(t)$, respectively. The evolution of the trajectories of all agents in the workspace is depicted in Fig. \ref{fig:workspace}. The distance between the neighboring agents $1-2$ and $2-3$ is represented in Fig. \ref{fig:distance}. Finally, the control effort of each agent is shown in Fig. \ref{fig:inputs}. 

It can be observed that all agents reach their desired configurations by satisfying all the constraints imposed by Problem $1$. The simulation was performed in MATLAB R2015a Environment utilizing the NMPC optimization toolbox provided in \cite{grune_2011_nonlinear_mpc}. The simulation takes $82 \sec$ on a desktop with $8$ cores, $3.60$ GHz CPU and $16$GB of RAM.

\begin{remark}
	It should be noted that in this paper we consider heterogeneous agents, i.e., functions $f_i$, $i \in \mathcal{V}$ in \eqref{eq:sys} may be different for each agent. In the aforementioned simulation example, for convenience and simplified calculations, we considered homogeneous agents, i.e., the functions $f_i$ are the same for all agents.
\end{remark}

\begin{figure}[t!]
	\centering
	\includegraphics[scale = 0.6]{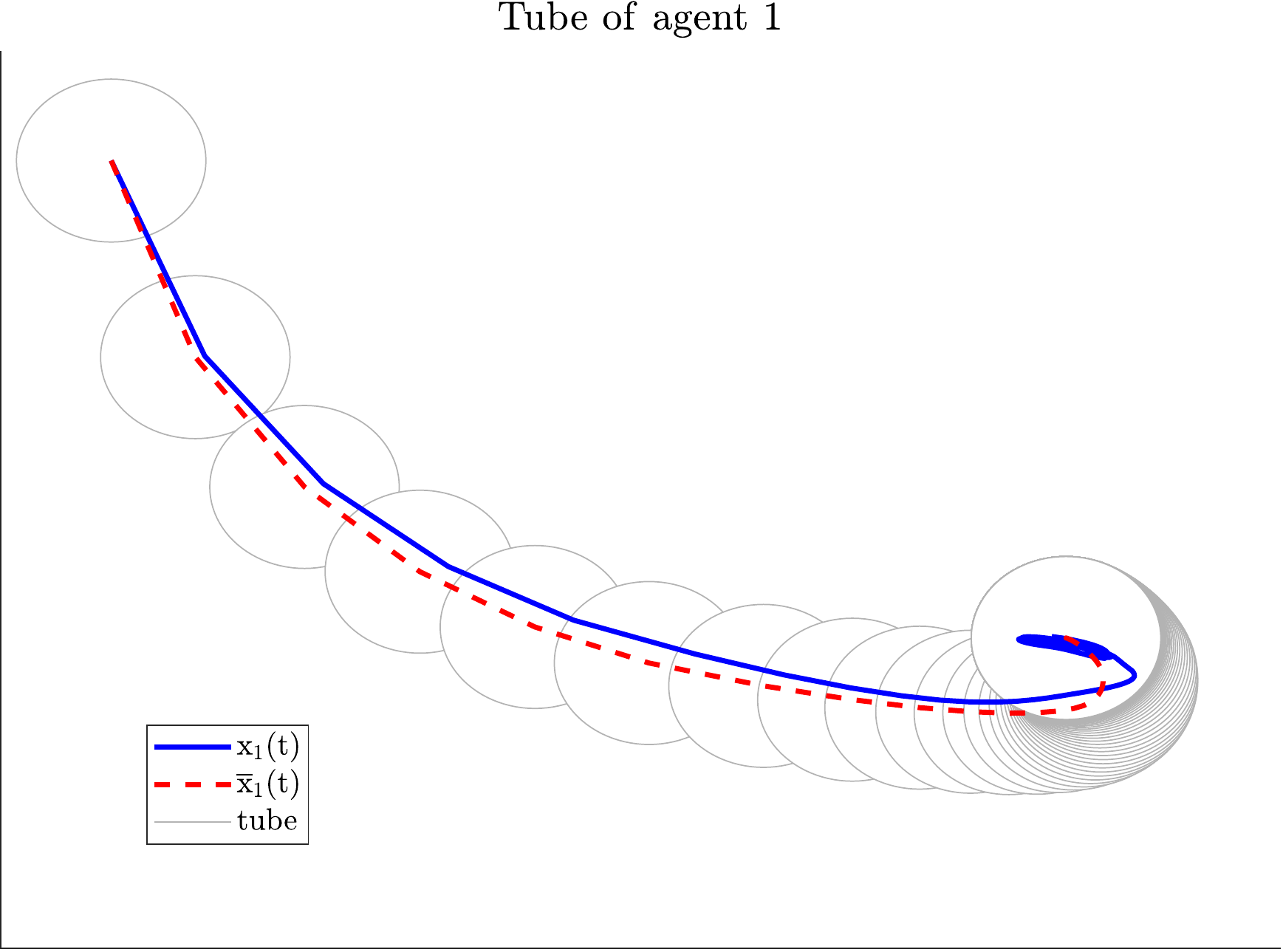}
	\caption{The evolution of the trajectory of agent $1$ in the workspace $\mathcal{D}$ over the time interval $[0,10] \sec$. The solid and the dashed lines represent the real and the nominal trajectory, respectively. The gray circles represent the tube.}\label{fig:tube1}
\end{figure}

\begin{figure}[t!]
	\centering
	\includegraphics[scale = 0.6]{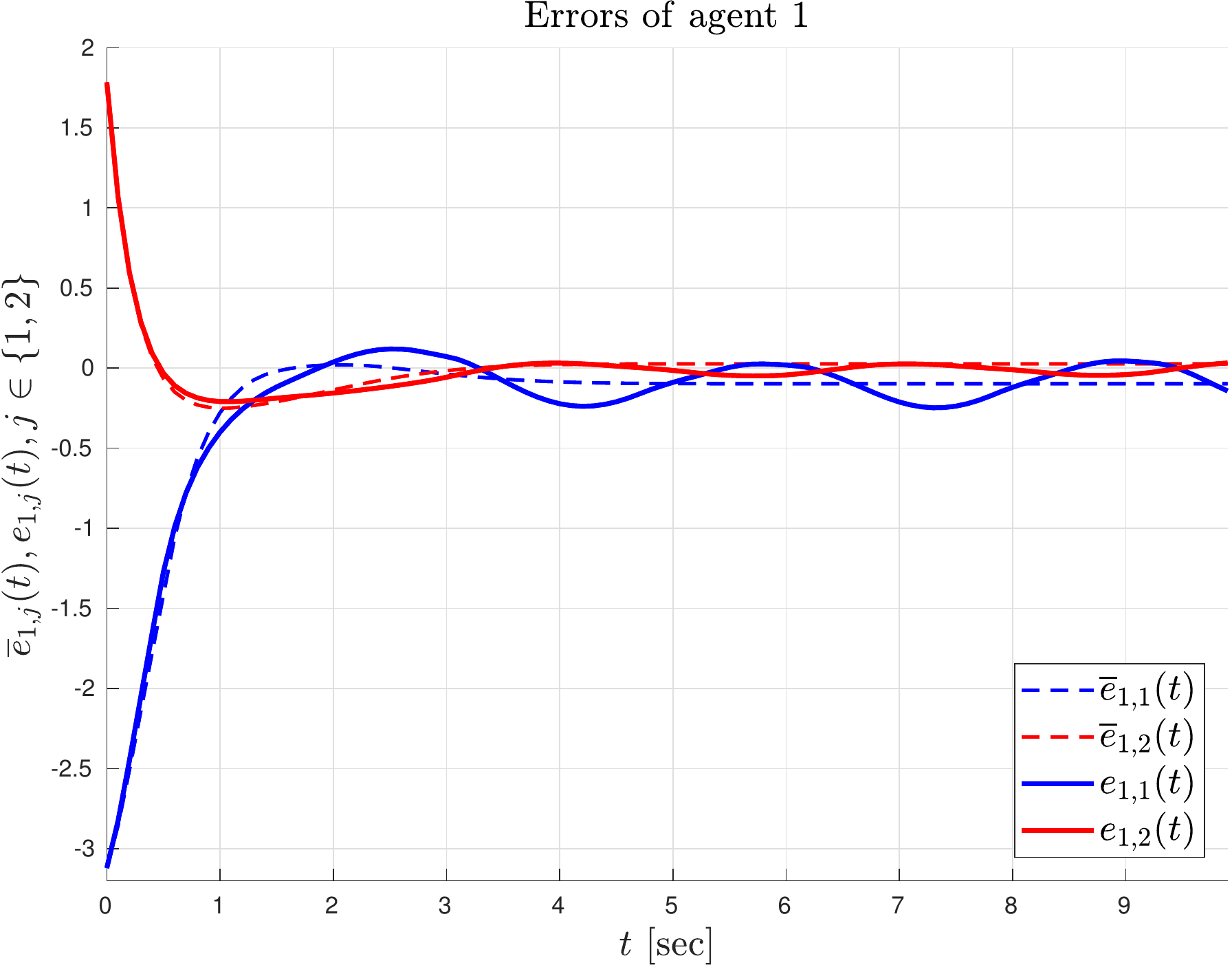}
	\caption{The evolution of the real error signals $e_1(t)$ and the nominal error signals $\overline{e}_1(t)$ over the time interval $[0,10] \sec$}\label{fig:error1}
\end{figure}

\begin{figure}[t!]
	\centering
	\includegraphics[scale = 0.6]{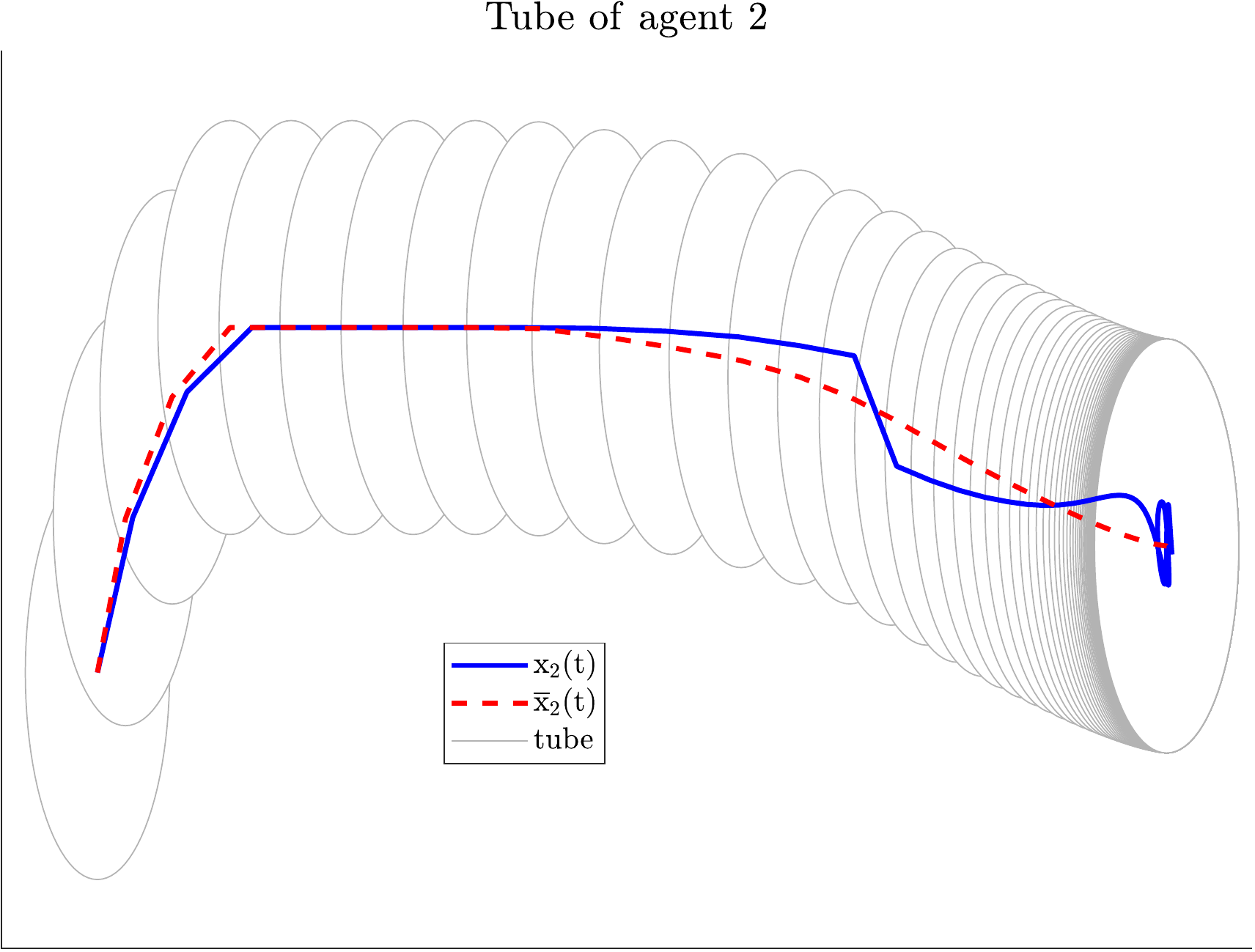}
	\caption{The evolution of the trajectory of agent $2$ in the workspace $\mathcal{D}$ over the time interval $[0,10] \sec$. The solid and the dashed lines represent the real and the nominal trajectory, respectively. The gray circles represent the tube.}\label{fig:tube2}
\end{figure}

\begin{figure}[t!]
	\centering
	\includegraphics[scale = 0.6]{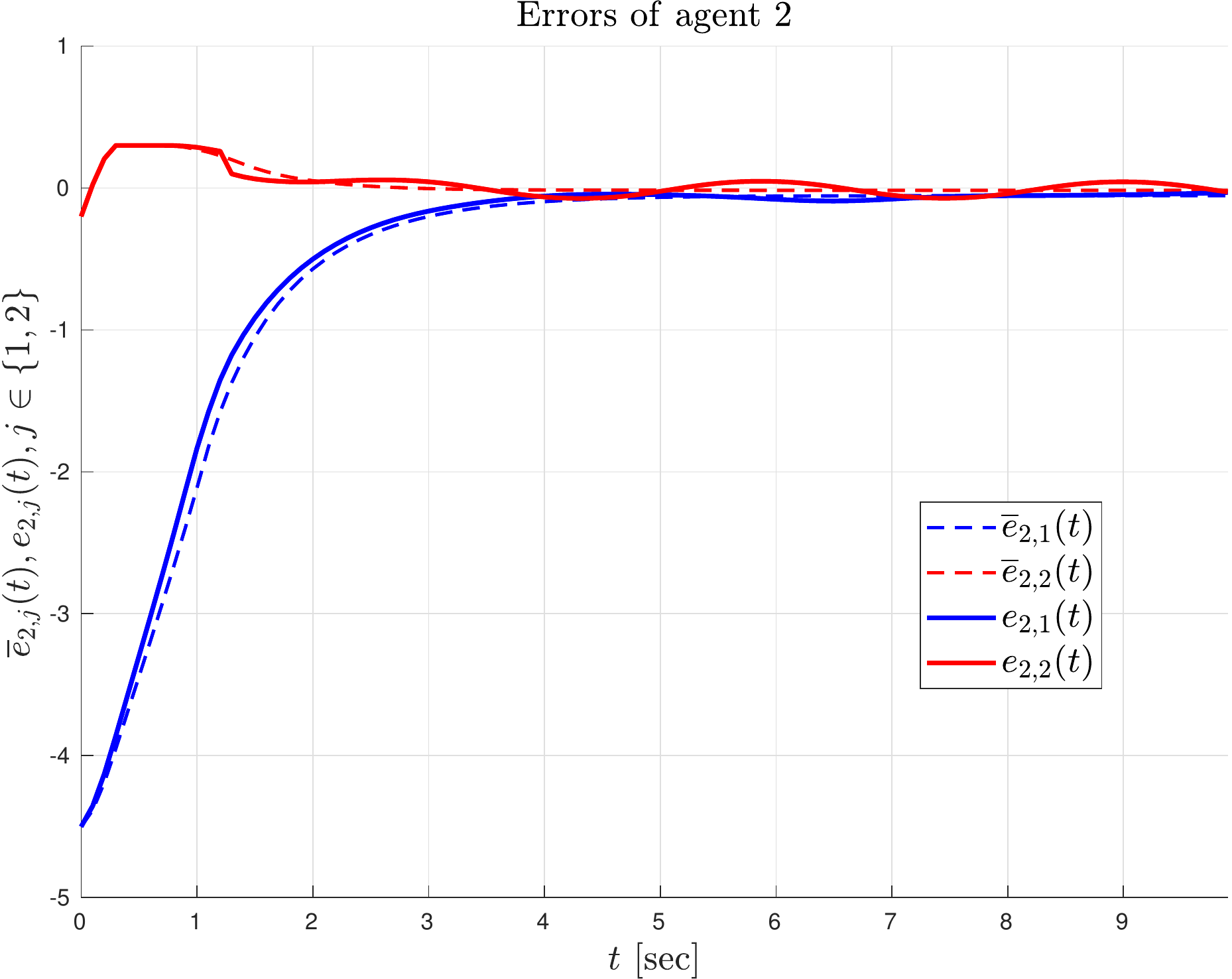}
	\caption{The evolution of the real error signals $e_2(t)$ and the nominal error signals $\overline{e}_2(t)$ over the time interval $[0,10] \sec$.}\label{fig:error2}
\end{figure}

\begin{figure}[t!]
	\centering
	\includegraphics[scale = 0.55]{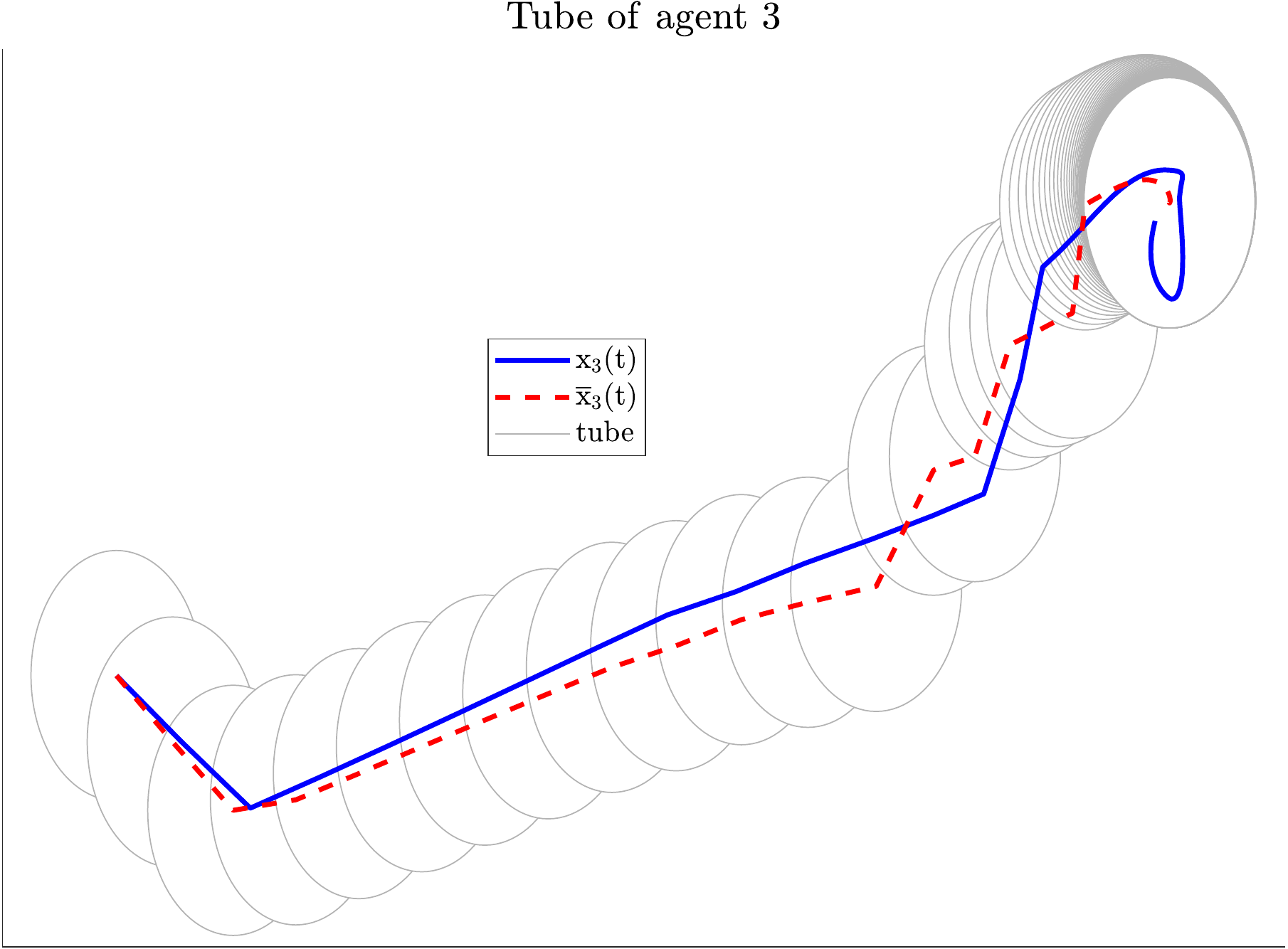}
	\caption{The evolution of the trajectory of agent $3$ in the workspace $\mathcal{D}$ over the time interval $[0,10] \sec$. The solid and the dashed lines represent the real and the nominal trajectory, respectively. The gray circles represent the tube.}\label{fig:tube3}
\end{figure}

\begin{figure}[t!]
	\centering
	\includegraphics[scale = 0.55]{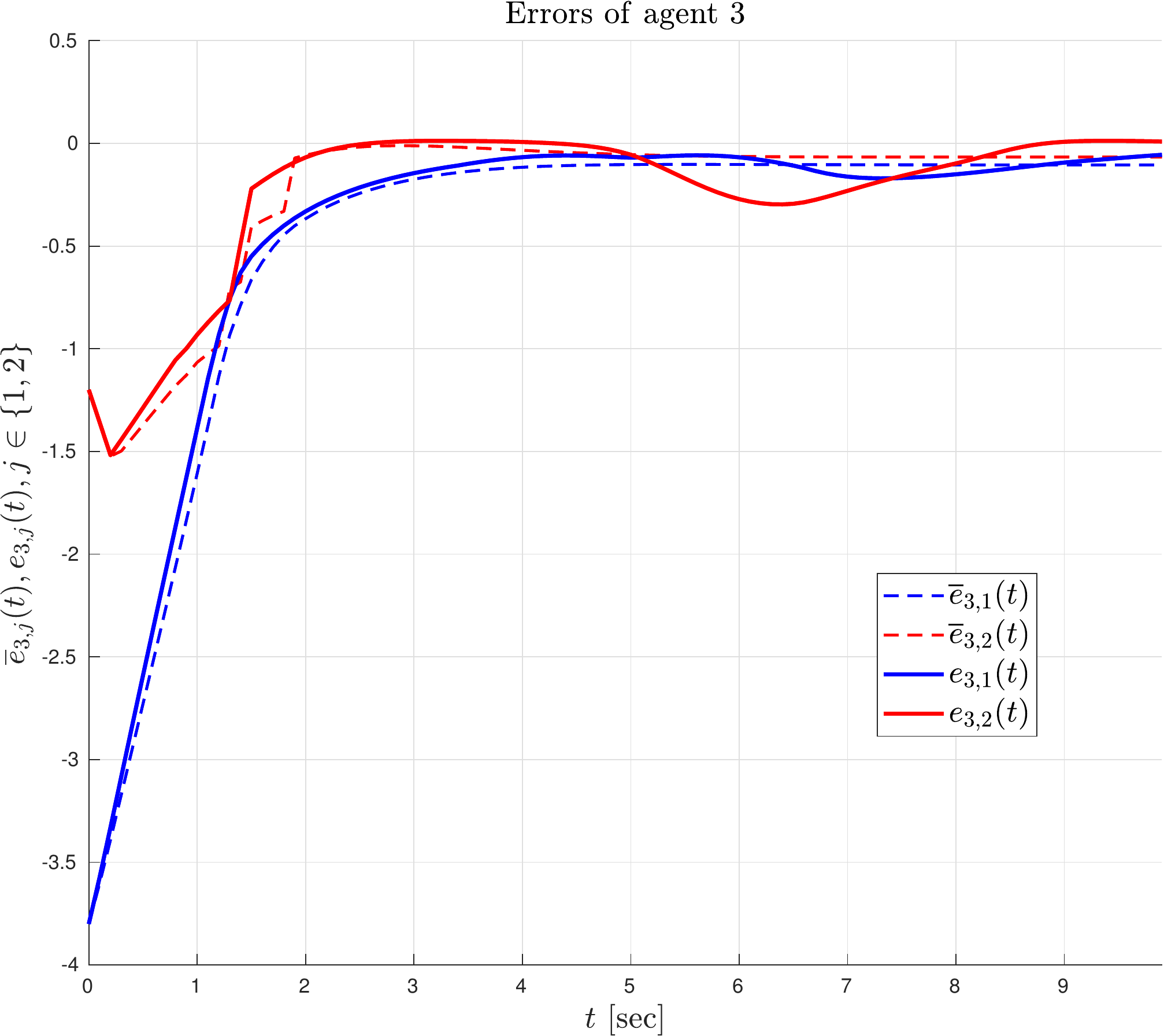}
	\caption{The evolution of the real error signals $e_3(t)$ and the nominal error signals $\overline{e}_3(t)$ over the time interval $[0,10] \sec$.}\label{fig:error3}
\end{figure}

\begin{figure}[t!]
	\centering
	\includegraphics[scale = 0.55]{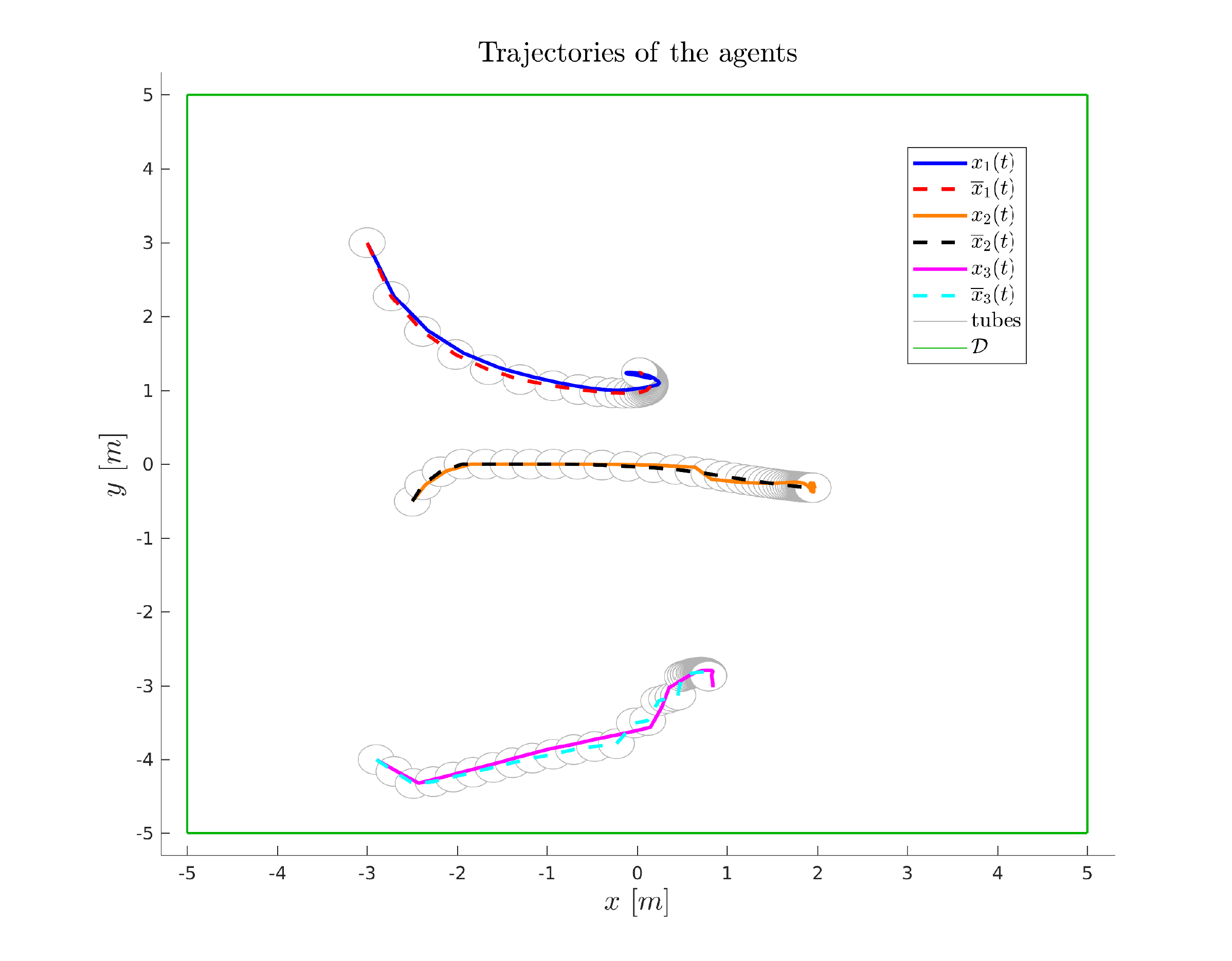}
	\caption{The workspace $\mathcal{D}$ along with the trajectories of all agents.}\label{fig:workspace}
\end{figure}

\begin{figure}[t!]
	\centering
	\includegraphics[scale = 0.55]{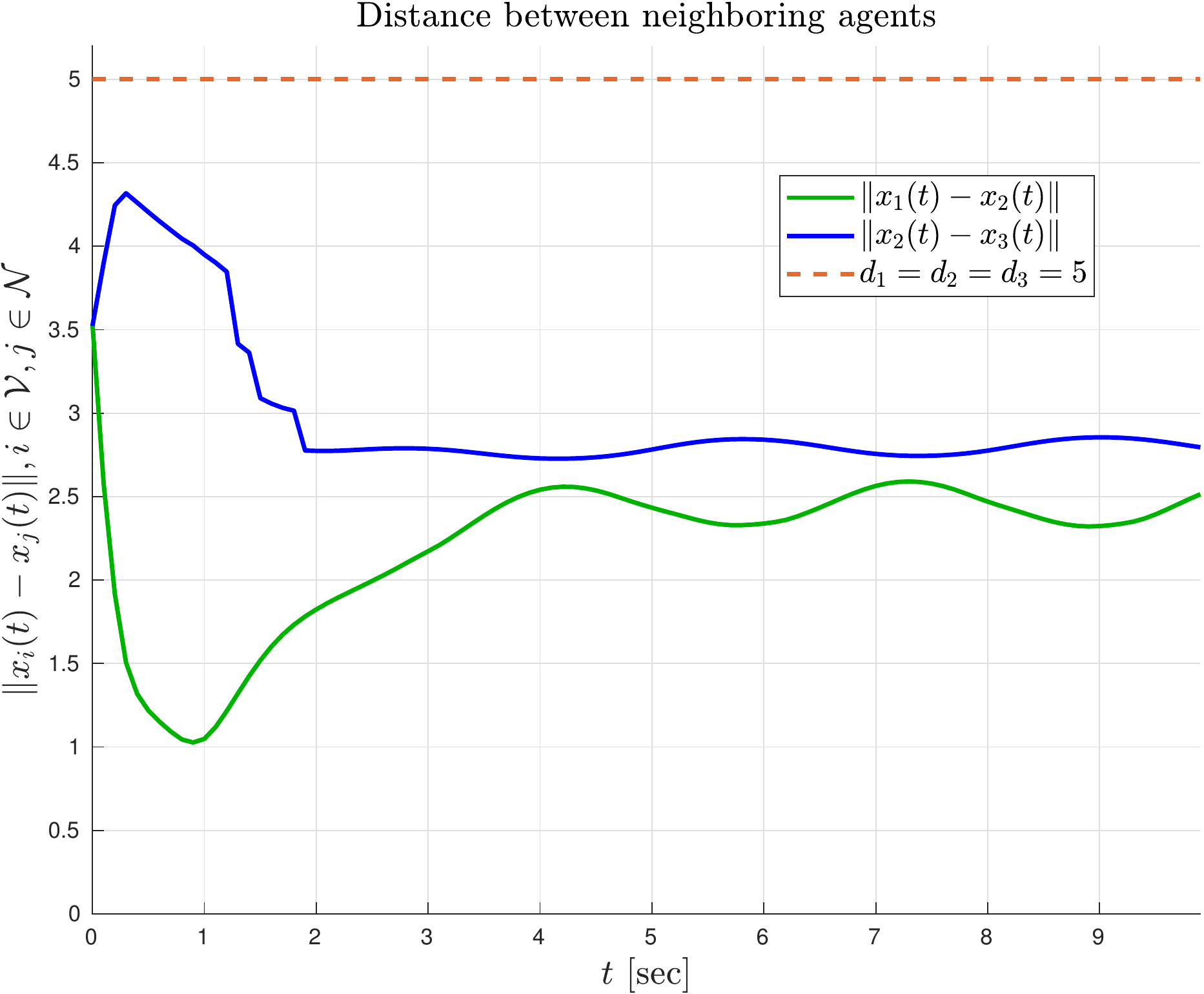}
	\caption{The distance between the neighboring agents $1-2$ and $2-3$. The distance remains below the threshold $d_1 = d_2 = d_3 = 5$ for all times, i.e., the connectivity of the neighboring agents is preserved for all times.}\label{fig:distance}
\end{figure}

\section{Conclusions and Future Research} \label{sec:conclusions}

This paper investigates the problem of decentralized tube-based MPC for uncertain nonlinear continuous-time multi-agent systems. Each agent has a limited sensing range within which can exchange information with neighboring agents. The task involves navigation to predefined configurations with connectivity preservation of the initially connected agents. Each agent solves a nominal DFHOCP in order to calculate a potion of its control input. The other portion of the control input is calculated offline in order to guarantee that the real trajectory of the agent remains in a bounded hyper-tube for all times, due to disturbances. Simulation results verify the proposed approach. Future efforts will be devoted towards the directions of reducing the communication burden between the agents by introducing event-triggered communication controllers.

\clearpage

{\centering \section*{Appendices}}

\appendix

\begin{figure}[t!]
	\centering
	\includegraphics[scale = 0.6]{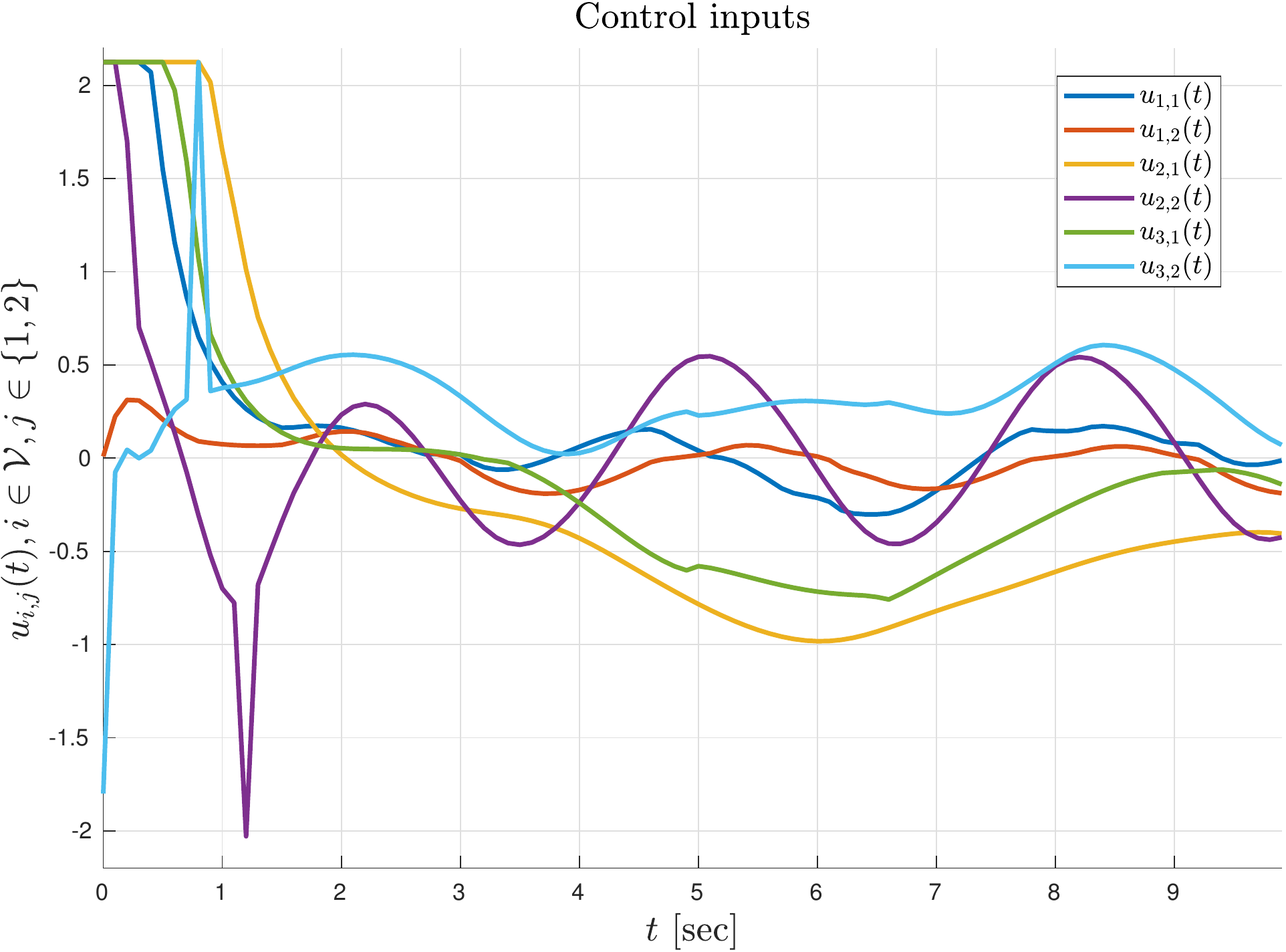}
	\caption{The control input signals $u_i(t)$, $i \in \mathcal{V}$ satisfying the constraints $u_i(t) \in \mathcal{U}_i$, $\forall i \in \mathcal{V}$, $t \in \mathbb{R}_{\ge 0}$.}\label{fig:inputs}
\end{figure}

\section{Proof of Lemma \ref{leamm:proposed_RPI_set}} \label{appendix:proof_lemma_proposed_RPI_set}

\noindent Consider the candidate Lyapunov function $\varphi(z_i) = \frac{1}{2} \|z_i\|^2 > 0$ with $\varphi(0) = 0$. The time derivative of $\varphi$ along the trajectories of the system \eqref{eq:z_dynamics}, is given by:
\begin{align*}
\dot{\varphi}(z_i) & = z_{i}^\top \dot{z}_{i} \notag \\
& = z_{i}^\top \Big[ \Lambda_i(e_i, \overline{e}_i, u_i)+f_i(\overline{e}_i+x_{\scriptscriptstyle i, \rm des}, u_i)- f_i(\overline{e}_i+x_{\scriptscriptstyle i, \rm des}, \overline{u}_i)+w_i\Big] \notag \\
& \le \|z_i\|\|\Lambda_i(e_i, \overline{e}_i, u_i)\|+z_{i}^\top w_i + z_{i}^\top \big[ f_i(\overline{e}_i+x_{\scriptscriptstyle i, \rm des}, u_i) - f_i(\overline{e}_i+x_{\scriptscriptstyle i, \rm des}, \overline{u}_i) \big], 
\end{align*}
which by using the upper bound of $\Lambda_i$ as in \eqref{eq:L_bound}, becomes:
\begin{align}
\dot{\varphi}(z_i) & \le L_i \|z_i\|^2+z_{i}^\top w_i + z_{i}^\top \big[ f_i(\overline{e}_i+x_{\scriptscriptstyle i, \rm des}, u_i) - f_i(\overline{e}_i+x_{\scriptscriptstyle i, \rm des}, \overline{u}_i) \big].  \label{eq:lyap1}
\end{align}
By invoking Lemma \ref{lemma:quadr_forms}, for $M = I_n$ we have:
\begin{align*}
z_{i}^\top w_i & \le \frac{\|z_{i}\|^2}{4 \rho_i}+\rho_i\|w_i\|^2 \le \frac{\|z_{i}\|^2}{4 \rho_i}+ \rho_i \widetilde{w}_i^2,
\end{align*}
for any constants $\rho_i > 0$. By using the aforementioned inequalities, \eqref{eq:lyap1} becomes:
\begin{align}
\dot{\varphi}(z_i) & \le \left(L_i+\frac{1}{4 \rho_i}\right) \|z_i\|^2+ \rho_i \widetilde{w}_i^2 + z_{i}^\top \big[ f_i(\overline{e}_i+x_{\scriptscriptstyle i, \rm des}, u_i) - f_i(\overline{e}_i+x_{\scriptscriptstyle i, \rm des}, \overline{u}_i) \big]. \label{eq:lyap_2}
\end{align}
According to Proposition \ref{eq:MVT}, and due to the fact that the sets $\mathcal{U}_i$ are convex, i.e., $${\rm Co}(u_i, \overline{u}_i) \subseteq \mathcal{U}_i, \forall u_i, \overline{u}_i \in \mathcal{U}_i, i \in \mathcal{V},$$ there exist constant vectors $\xi_{i,1}$, $\dots,$ $\xi_{i,n} \in {\rm Co}(u_i,\overline{u}_i)$ such that:
\begin{align*}
f_i(\overline{e}_i+x_{\scriptscriptstyle i, \rm des}, u_i)-f_i(\overline{e}_i+x_{\scriptscriptstyle i, \rm des}, \overline{u}_i)  & = \left[ \sum_{k = 1}^{n} \sum_{j=1}^{n} \ell_n(k) \ell_n(j)^\top \frac{\partial f_{i,k}(\overline{e}_i+x_{\scriptscriptstyle i, \rm des}, \xi_{i,k})}{\partial u_j} \right] (u_i-\overline{u}_i) \notag \\
& = J_i(\overline{e}_i+x_{\scriptscriptstyle i, \rm des}, \xi_{i,k}) (u_i - \overline{u}_i).
\end{align*}
Then, \eqref{eq:lyap_2} becomes:
\begin{align}
\dot{\varphi}(z_i) & \le \left(L_i+\frac{1}{4 \rho_i}\right) \|z_i\|^2+\rho_i\widetilde{w}_i^2 + z_{i}^\top J_i(\overline{e}_i+x_{\scriptscriptstyle i, \rm des}, \xi_{i,k}) (u_i - \overline{u}_i). 
\end{align}
\noindent By designing the control laws $u_i - \overline{u}_i$ as in \eqref{eq:control_law_u_kappa}, \eqref{eq:kappa_law} we get:
\begin{align*}
\dot{\varphi}(z_i) & \le \left(L_i+\frac{1}{4 \rho_i}\right) \|z_i\|^2+\rho_i \widetilde{w}_i^2 - k_i z_{i}^\top J_i(\overline{e}_i+x_{\scriptscriptstyle i, \rm des}, \xi_{i,k}) z_{i}. 
\end{align*}
Writing the matrices $J_i$ as $J_i = \frac{J_i+J_i^\top}{2}+\frac{J_i-J_i^\top}{2}$ and taking into account that $y^\top\left(\frac{J_i-J_i^\top}{2}\right) y = 0$, $\forall y \in \mathbb{R}^n$ we get:
\begin{align*}
\dot{\varphi}(z_i) & \le \left(L_i+\frac{1}{4 \rho_i}\right) \|z_i\|^2+\rho_i \widetilde{w}_i^2 - k_i z_{i}^\top \left[\frac{J_i(\overline{e}_i+x_{\scriptscriptstyle i, \rm des}, \xi_{i,k})+J_i^\top(\overline{e}_i+x_{\scriptscriptstyle i, \rm des}, \xi_{i,k})}{2}\right] z_{i}. 
\end{align*}
By using \eqref{eq:lambda_min} from Assumption \ref{ass:lower_bound_deriv} and the fact that:
\begin{align*}
y^\top P y \ge \lambda_{\min}(P)\|y\|^2, \forall y \in \mathbb{R}^n, P \in \mathbb{R}^{n \times n}, P >0,
\end{align*}
we obtain:
\begin{align*}
\dot{\varphi}(z_i) & \le \left(L_i+\frac{1}{4 \rho_i}\right) \|z_i\|^2+ \rho_i \widetilde{w}_i^2 - k_i \lambda_{\min} \left[ \frac{J_i(\overline{e}_i+x_{\scriptscriptstyle i, \rm des}, \xi_{i,k})+J^\top_i(\overline{e}_i+x_{\scriptscriptstyle i, \rm des}, \xi_{i,k})}{2}\right] \|z_{i}\|^2 \notag \\
& \le \left(L_i+\frac{1}{4 \rho_i}\right) \|z_i\|^2+\rho_i \widetilde{w}_i^2 - k_i \underline{J}_i \|z_{i}\|^2 \notag \\
& = -\left(k_i \underline{J}_i - L_i-\frac{1}{4 \rho_i}\right) \|z_i\|^2+ \rho_i \widetilde{w}_i^2 \notag \\
& = - \underline{J}_i \left[k_i - \frac{1}{\underline{J}_i} \left(L_i + \frac{1}{4 \rho_i}\right) \right] \|z_i\|^2+ \rho_i \widetilde{w}_i^2. 
\end{align*}
By designing the control gains $k_i$ as in \eqref{eq:control_gains2}, it yields:
\begin{align*}
\dot{\varphi}(z_i) & = -\underline{k}_i \underline{J}_i \|z_i\|^2+ \rho_i \widetilde{w}_i^2.
\end{align*}
Thus, it is guaranteed that $\dot{\varphi}(z_i) < 0$ when: $\|z_i\| > \widetilde{z}_i$, where $\widetilde{z}_i$ is given in \eqref{eq:Omega_set}. By invoking Theorem \ref{teheorem:uub_theorem} and due to the fact that $z_i(0) = 0$, $\forall i \in \mathcal{V}$ we have that:
\begin{align*}
\|z_i(t)\| \le \widetilde{z}_i, \forall t \in \mathbb{R}_{\ge 0}, i \in \mathcal{V},
\end{align*}
which leads to the conclusion of the proof.
\qed

\section{Feasibility Analysis of Theorem \ref{theorem}} \label{app:feasibility_analysis}

Consider a sampling instant $t_k$ for which a solution $\overline{u}_i^{\star}\big(\cdot;\ \bar{e}_i(t_k)\big)$ to DFHOCP \eqref{eq:mpc_cost_function}-\eqref{eq:mpc_terminal_set} of agent $i \in \mathcal{V}$ exists. Suppose now a time instant $t_{k+1}$ such that $t_{k+1} = t_k + \delta$, and consider that the optimal control signal calculated at $t_k$ is comprised of the following two portions:
\begin{equation} \label{eq:optimal_input_portions}
\overline{u}_i^{\star}\big(\cdot;\ \overline{e}_i(t_k)\big) = \left\{
\begin{array}{ll}
\overline{u}_i^{\star}\big(s;\ \overline{e}_i(t_k)\big), & s \in [t_k, t_{k+1}] \\
\overline{u}_i^{\star}\big(s;\ \overline{e}_i(t_k)\big), & s \in [t_{k+1}, t_k + T_p]
\end{array},
\right.
\end{equation}

Both portions are admissible since the calculated optimal control input is admissible, and hence they both conform to the input constraints. Furthermore, the predicted states $\overline{e}_i(s; \ \overline{u}_i^\star(\cdot), \overline{e}_i(t_k))$ will satisfy the state constraints for every $s \in [t_k, t_k+T]$ and it also holds that:
\begin{align}
\overline{e}_i\big(t_k + T;\ \overline{u}_i^{\star}(\cdot), \overline{e}_i(t_k)\big) \in \mathcal{F}_i.
\end{align}

As discussed in Section \ref{sec:online_control_design}, according to Assumption \ref{ass:stabilized_linear_assumption}, there exists an admissible control input $u_{i, \text{loc}}(\overline{e}_i)$ that renders $\mathcal{F}_i$ invariant over $[t_k + T, t_{k+1} + T]$. Given the above facts, we can construct an admissible input $\overline{u}_i(\cdot)$ starting at time $t_{k+1}$ by sewing together the second portion of \eqref{eq:optimal_input_portions} and the input $u_{i, \text{loc}}(\overline{x})$ as:
\begin{equation*}
\widetilde{u}_i(s) = \left\{
\begin{array}{ll}
\overline{u}_i^{\star}\big(s;\ \overline{e}_i(t_k)\big), & s \in [t_{k+1}, t_k + T] \\
u_{i, \text{loc}}\big(\overline{e}_i(s)\big), & s \in (t_k + T, t_{k+1} + T]\\
\end{array},
\right.
\end{equation*}
Applied at time $t_{k+1}$, $\widetilde{u}_i(s)$ is an admissible control input with respect to the input constraints as a composition of admissible control inputs, for all $s \in [t_{k+1}, t_{k+1}+T]$. What remains to prove is the following statement.

\noindent \textbf{Statement} :  $e_i\big(t_{k+1} + s;\ \overline{u}_i^{\star}(\cdot), e_i(t_{k+1})\big) \in \mathcal{E}_i$, $\forall s \in [0,T]$.

\noindent Initially, at time $t_{k+1}$, $\widetilde{u}_i$ is an admissible control input according to Definition \ref{definition:admissible_input_with_disturbance}. The, according to $3)$ of Definition \ref{definition:admissible_input_with_disturbance} we have that:
\begin{align*}
& \overline{e}_i\big(t_{k+1} + s;\ \overline{u}_i^{\star}(\cdot), \overline{e}_i(t_{k+1})\big) \in \overline{\mathcal{E}_i} = \mathcal{E}_i \ominus \mathcal{Z}_i, \forall s \in [0, T], \notag
\end{align*}
By invoking \eqref{eq:error_z} and the fact that $\mathcal{Z}_i$ are RCI sets it is guaranteed that: 
\begin{align*}
& e_i\big(t_{k+1} + s;\ \overline{u}_i^{\star}(\cdot), e_i(t_{k+1})\big)- \overline{e}_i\big(t_{k+1} + s;\ \overline{u}_i^{\star}(\cdot), \overline{e}_i(t_{k+1})\big) \in \mathcal{Z}_i, \forall s \in [0,T].
\end{align*}
Adding the latter to the former yields:
\begin{align*}
e_i\big(t_{k+1} + s;\ \overline{u}_i^{\star}(\cdot), e_i(t_{k+1})\big) \in \big( \mathcal{E}_i \ominus \mathcal{Z}_i \big) \oplus \mathcal{Z}_i, \forall s \in [0, T].
\end{align*}
By using Property \ref{prop:set_prop} we have that: $\big( \mathcal{E}_i \ominus \mathcal{Z}_i \big) \oplus \mathcal{Z}_i \subseteq \mathcal{E}_i$. Thus, it holds that $e_i\big(t_{k+1} + s;\ \overline{u}_i^{\star}(\cdot)$, $e_i(t_{k+1})\big) \in \mathcal{E}_i, \forall s \in [0, T]$, which concludes the proof of the statement. \\ \\ \indent By taking the aforementioned into consideration, the feasibility of a solution to the optimization problem at time $t_k$ implies feasibility at all times $t_{n+1}$, with $n > k$. Thus, since at time $t=0$ a solution is assumed to be feasible, a solution to the optimal control problem is feasible for all $t \in \mathbb{R}_{\ge 0}$, and for all agents $i \in \mathcal{V}$. \qed

\section{Convergence Analysis of Theorem \ref{theorem}} \label{app:convergence_analysis}

Due to the fact the sets $\mathcal{Z}_i$ as given in \eqref{eq:Omega_set}, are RCI, i.e.,  $z_i(t) \in \mathcal{Z}_i$, for every $t \in \mathbb{R}_{\ge 0}$, there exist class $\mathcal{K}$ functions $\gamma_i$ (see \cite[Sec. 4.9, p. 175]{khalil_nonlinear_systems}) such that:
\begin{equation} \label{eq:ISS_arg2}
\|z_i(t)\| \le \gamma_i \left(\displaystyle \sup_{0 \le \tau \le t} \|w_i(\tau)\|\right), \forall t \in \mathbb{R}_{\ge 0}, i \in \mathcal{V}.
\end{equation}

Since only the nominal system dynamics \eqref{eq:nominal_system} are used for the online computation of the control actions $\overline{u}_i(s) \in \overline{\mathcal{U}}_i$, $s \in [t_k, t_k+T]$ through the DFHOCP \eqref{eq:mpc_cost_function}-\eqref{eq:mpc_terminal_set}, by invoking nominal NMPC stability results found on \cite{frank_1998_quasi_infinite, nmpc_bible}, it can be proven that the NMPC control law $\overline{u}_i$ renders the closed loop trajectories of the nominal system \eqref{eq:nominal_system} asymptotically ultimated bounded in the sets $\mathcal{F}_i$, for all $i \in \mathcal{V}$, as given in Definition \ref{def:asympt_ultimately_bounded}. Then, from \cite[Lemma 4.5, p. 150]{khalil_nonlinear_systems}, there exist class $\mathcal{KL}$ functions $\beta_i$, such that:
\begin{equation} \label{eq:ISS_arg1}
\|\overline{e}_i(t)\| \le \beta_i(\|\overline{e}_i(0)\|, t), \forall t \in \mathbb{R}_{\ge 0}, i \in \mathcal{V}.
\end{equation}
According to \eqref{eq:error_z} it holds that: 
\begin{align*}
e_i(t) & = \overline{e}_i(t)+z_i(t) \notag \\
\Rightarrow \|e_i(t)\| & = \|\overline{e}_i(t)+z_i(t)\|  \le  \|\overline{e}_i(t)\| + \|z_i(t)\|, i \in \mathcal{V},
\end{align*}
and $e_i(0) =  \overline{e}_i(0)$. By combining the latter with \eqref{eq:ISS_arg2} and \eqref{eq:ISS_arg1}, we conclude that for all agents $i \in \mathcal{V}$ and for all the conditions $e_i(0) \in \mathcal{E}_i$ it holds that:
\begin{equation*}
\|e_i(t)\| \leq \beta_i \big(\|e_i(0)\|,t\big) + \gamma_i \left(\displaystyle \sup_{0 \le \tau \le t} \|w_i(\tau)\|\right), \forall t \in \mathbb{R}_{\ge 0}.
\end{equation*}
Thus, we have shown that the proposed control law \eqref{eq:control_law_u_kappa} renders the closed-loop system \eqref{eq:closed_loop_system} ISS with reference to the disturbances $w_i(t) \in \mathcal{W}_i$, for every initial condition $x_i(0) \in \mathcal{X}_i$ and $i \in \mathcal{V}$. \qed

\clearpage

\bibliography{references}
\bibliographystyle{ieeetr}

\clearpage 

\begin{wrapfigure}{l}{25mm} 
\includegraphics[width=1in,height=1.25in,clip,keepaspectratio]{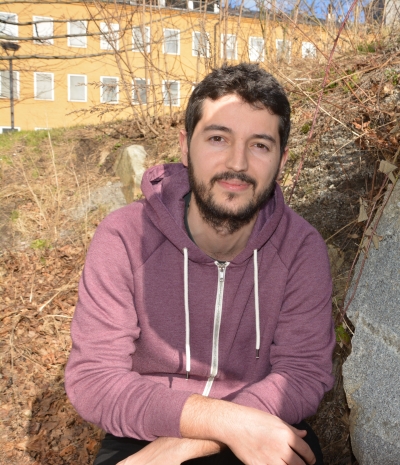}
\end{wrapfigure}\par
\textbf{Alexandros Nikou} was born in Athens, Greece, in 1988. He received the Diploma in Electrical and Computer Engineering in 2012 and the M.Sc. in Automatic Control in 2014, both from National Technical University of Athens (NTUA), Greece. He is currently a PhD student at the Department of Automatic Control, School of Electrical Engineering and Computer Science, KTH Royal Institute of Technology, Stockholm, Sweden. His current research interests include Multi-Agent Systems Control, Distributed Nonlinear Model Predictive Control and Formal Methods in Control.\par

\vspace{5mm}

\begin{wrapfigure}{l}{25mm} 
\includegraphics[width=1.6in,height=1.25in,clip,keepaspectratio]{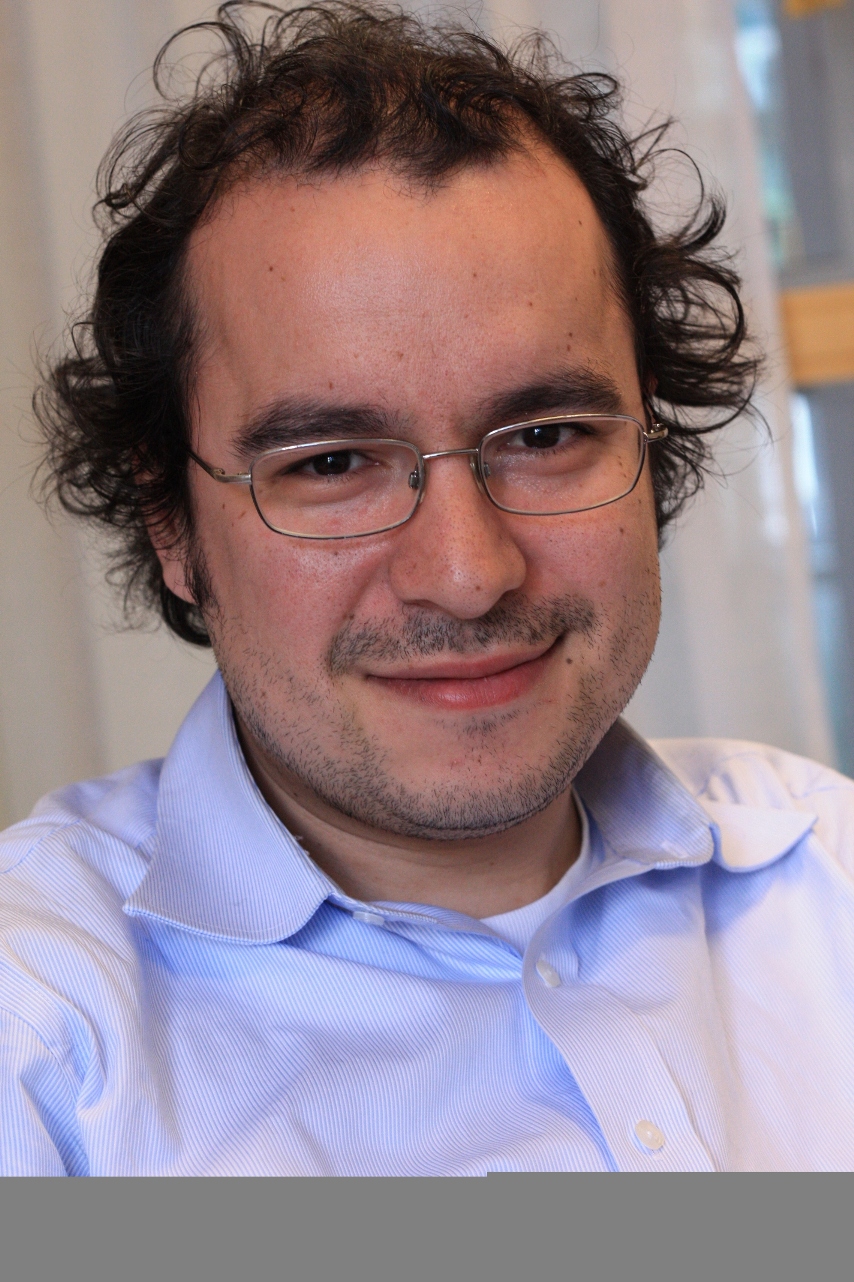}
\end{wrapfigure}\par
\textbf{Dimos Dimarogonas} was born in Athens, Greece, in 1978. He received the Diploma in Electrical and Computer Engineering in 2001 and the Ph.D. in Mechanical Engineering in 2007, both from National Technical University of Athens (NTUA), Greece. Between May 2007 and February 2009, he was a Postdoctoral Researcher at the Department of Automatic Control, School of Electrical Engineering and Computer Science, Royal Institute of Technology (KTH), Stockholm, Sweden. Between February 2009 and March 2010, he was a Postdoctoral Associate at the Laboratory for Information and Decision Systems (LIDS) at the Massachusetts Institute of Technology (MIT), Boston, MA, USA. He is currently Professor at the Department of Automatic Control, KTH Royal Institute of Technology, Stockholm, Sweden. His current research interests include Multi-Agent Systems, Hybrid Systems and Control, Robot Navigation and Networked Control. He serves in the Editorial Board of Automatica, the IEEE Transactions on Automation Science and Engineering and the IET Control Theory and Applications and is a Senior member of IEEE and the Technical Chamber of Greece.\par
\end{document}